\newtheorem{definition}{Definition}
\newtheorem{therm}{Theorem}
\newtheorem{example}{Example}
\newtheorem{lemma}{Lemma}
\newtheorem{corollary}{Corollary}
\newcommand{\inttype}{\textit{Int}}
\newcommand{\floattype}{\textit{Float}}
\newcommand{\booltype}{\textit{Bool}}
\newcommand{\strtype}{\textit{String}}
\newcommand{\btype}{b}           
\newcommand{\vtype}{\alpha}      
\newcommand{\ttype}{\sigma}      
\newcommand{\ltype}{\rho}        
\newcommand{\mtype}{\tau}        
\newcommand{\etype}{\gamma}      
\newcommand{\gkind}{\kappa}      
\newcommand{\ukind}{\mathcal{U}} 
\newcommand{\glab}{l}            
\newcommand{\gterm}{M} 
\newcommand{\vterm}{x} 
\newcommand{\cterm}{k} 
\newcommand{\val}{V} 
\newcommand{\vvalue}{\textit{value}} 
\newcommand{\pred}{\textit{p}} 
\newcommand{\eval}[1]{\mathcal{E}[#1]}
\newcommand{\evall}{\mathcal{E}}
\newcommand{\hole}{\ }
\newcommand{\geqs}{E}             
\newcommand{\ksubs}[2]{(#1,#2)}   
\newcommand{\kuni}[2]{(#1,#2)}    
\newcommand{\tenv}{\Gamma}        
\newcommand{\kenv}{K}             
\newcommand{\stvar}{\mathbb{V}}   
\newcommand{\stbase}{\mathbb{B}}  
\newcommand{\slab}{\mathcal{L}}   
\newcommand{\letin}[3]{\text{let} \ #1 = #2 \ \text{in} \ #3} 
\newcommand{\modif}[3]{\text{modify}(#1,#2,#3)} 
\newcommand{\subs}{S}
\newcommand{\id}{\textit{id}}
\newcommand{\dom}[1]{\textit{dom}(#1)}
\newcommand{\ftv}[1]{\textit{FTV}(#1)}
\newcommand{\eftv}[2]{\textit{EFTV}(#1,#2)}
\newcommand{\cls}[3]{\textit{Cls}(#1,#2,#3)}
\newcommand{\unify}[2]{\textit{U}(#1,#2)}     
\newcommand{\infer}[3]{\textit{WK}(#1,#2,#3)} 
\newcommand{\spec}{\mathsf{spec}}
\newcommand{\gev}{\textit{ge}}
\newcommand{\ev}{\textit{e}}
\newcommand{\itype}{\inttype}
\newcommand{\bltype}{\booltype}
\newcommand{\atype}[2]{#1 \rightarrow #2} 
\newcommand{\ctype}[2]{(#1 \times #2)}    
\newcommand{\true}{\texttt{true}}
\newcommand{\false}{\texttt{false}}
\newcommand{\app}[2]{#1 \ #2}                                               
\newcommand{\abs}[2]{\lambda #1.#2}                                         
\newcommand{\letw}[3]{\text{let} \ #1 = #2 \ \text{in} \ #3}                
\newcommand{\letEv}[3]{\text{letEv} \ #1 = #2 \ \text{in} \ #3}             
\newcommand{\sel}[2]{#1.#2}                                                 
\newcommand{\modw}[3]{\text{modify}(#1,#2,#3)}                              
\newcommand{\cond}[3]{\text{if} \ #1 \ \text{then} \ #2 \ \text{else} \ #3} 
\newcommand{\proj}{\pi}
\newcommand{\minus}{\ \text{-} \ }
\newcommand{\greater}{\ \text{>} \ }
\newcommand{\lesser}{\ \text{<} \ }
\newcommand{\divi}{\ \text{/} \ }
\newcommand{\logicand}{\ \text{and} \ }
\newcommand{\greaterfloattype}{\atype{\floattype}{\atype{\floattype}{\booltype}}}
\newcommand{\lesserfloattype}{\atype{\floattype}{\atype{\floattype}{\booltype}}}
\newcommand{\logicandtype}{\atype{\booltype}{\atype{\booltype}{\booltype}}}
\newcommand{\danger}{\text{fire\_danger}}
\newcommand{\wind}{\text{wind}}
\newcommand{\humidity}{\text{humidity}}
\newcommand{\precipitation}{\text{precipitation}}
\newcommand{\temperature}{\text{temperature}}
\newcommand{\location}{\text{location}}
\newcommand{\fst}{\text{fst}}
\newcommand{\snd}{\text{snd}}
\newcommand{\porto}{\texttt{"Porto"}}
\newcommand{\low}{\texttt{"low"}}
\newcommand{\WeatherInfo}{\text{WeatherInfo}}
\newcommand{\FireDanger}{\text{FireDanger}}
\newcommand{\kind}[1]{\{\!\{#1\}\!\}}
\newcommand{\ra}{\rightarrow}
\newcommand{\Ra}{\Rightarrow}
\newcommand{\EVL}{{\tt EVL}\xspace}
\newcommand{\shterm}{M}    
\newcommand{\shtype}{\tau} 
\newcommand{\shsubs}{S}    
\newcommand{\cmiguel}[2]{\miguel{#2}}
\newcommand{\miguel}[1]{\textcolor{red} {#1}}
\title{EVL: a typed functional language for event processing\thanks{This work is financed by National Funds through the Portuguese funding agency, FCT - Fundação para a Ciência e a Tecnologia, within project UIDB/50014/2020. We also acknowledge  support by the London Mathematical Society (SC7-1920-16).}}
\author{Sandra Alves\inst{1,3}
\and Maribel Fern\'andez\inst{2}
\and Miguel Ramos\inst{3}}
\institute{CRACS - INESCTEC \\ Porto, Portugal\\
\email{sandra@fc.up.pt}\\\ \\
\and
Dept.\ of Informatics\\ King's College London, London WC2B 4BG, U.K.\\ \email{maribel.fernandez@kcl.ac.uk}\\\ \\
\and 
DCC-FCUP \\ University of Porto, Porto, Portugal\\
   \email{jmiguelsramos@gmail.com}
 }
\begin{document}

\maketitle

\begin{abstract} 
  We define \EVL, a minimal higher-order functional language to deal with generic events. The notion of generic event extends the well-known notion of event traditionally used in a variety of areas, such as database management, concurrency, reactive systems and cybersecurity. Generic events were introduced in the context of a metamodel to specify obligations in access control systems. Event specifications are represented as records and we use polymorphic record types to type events in \EVL. We show how the higher-order capabilities of \EVL can be used in the context of Complex  Event  Processing (CEP), to define higher-order parameterised functions that deal with the usual CEP techniques.
\end{abstract}

\section{Introduction}\label{intro}

Complex Event Processing (CEP), or simply event processing, refers to a set of techniques used to deal with event streams, including event identification, classification and response. Events are occurrences  of actions, or happenings, and a variety of languages to process events have been developed over the years~\cite{BargaGAH07,Luckham02,Etzion10}.  
Event processing is a key component of Internet-of-Things applications, which need to identify events in the streams of data generated by sensors and react appropriately. 
In  critical domains (e.g., healthcare) it is important to be able to prove properties of applications (correctness, security, safety, etc.) and for this we need languages with a formal semantics. In this paper, we focus on the functional paradigm, for which advanced techniques have been developed to reason about programs: we develop a functional language to specify and process events, with a polymorphic type system inspired by Ohori's record calculus~\cite{Ohori95}.


In the context of security, and in particular when modelling access control, it is often the case that granting or denying access to certain resources depends on the occurrence of a particular event~\cite{bsw08,HiltyPBSW07,Bertino:2001}. This is even more crucial in access control systems dealing with obligations, where the status of a particular obligation is usually defined in terms of event occurrences in the system, and several models that deal with obligations have to deal in some way with the notion of event. The Category-Based metamodel for Access Control and Obligations (CBACO~\cite{AlvesDM14}), axiomatises the notion of obligation based on generic relations involving events and event intervals. A key distinction is made between event schemes, which provide a general description of the kind of events that can occur in a particular system, and specific events, which represent actual events that have occurred. 

Events can take various forms, depending on the system that is being considered (for example,  messages exchanged over a network, actions performed by users of the system, occurrences of physical phenomena such as a disk error or a fire alarm, etc). 
To deal with event classification in a uniform way, Alves et. al.~\cite{AlvesBF15} defined a general term-based language for events.  In this language, events are represented as typed-terms built from a user-defined signature, that is, a particular set of typed function symbols that are specific to the system modelled. With this approach it is possible to define general functions to implement event typing and to compute event intervals, without needing to know the exact type of events.
A compound event~\cite{AlvesBF15} links a set of events that  occur separately in the history, but should be identified as a single event occurrence. For simplicity, in~\cite{AlvesBF15} compound events were assumed to appear as a single event in history, leaving a more detailed and realistic treatment of compound events for future work. The notion of compound or composite event is also a key feature in CEP systems, which put great emphasis on the ability to detect complex patterns of incoming streams of events and establish sequencing and ordering relations.

Types  were used in~\cite{AlvesBF15} not only to ensure that terms representing events respect the type signature specific to the system under study, but also to formally define the notion of event instantiation, associating specific events to generic events through an implicit notion of subtyping, inspired by Ohori's system of polymorphic record types~\cite{Ohori95}. Because of the  implicit subtyping rule for typing records, the system defined in~\cite{AlvesBF15} allowed for type-checking of event-specification, but not for dealing with most general types for event specifications. 

To facilitate the specification and processing of events, including compound events, in~\cite{AlvesFR20} we introduced  \EVL, a higher-order polymorphic type system, which   is both a restriction and an extension of Ohori's  polymorphic record calculus~\cite{Ohori95}.  Although \EVL is very much based on that system, it is not meant to be a general system, but rather a language purposely designed for dealing with events. In this paper we complete the definition of the language by providing a call-by-value operational semantics for \EVL using evaluation contexts, which takes into account the domain-specific aspects of the language. 

Languages traditionally used in event processing systems are usually derived from relational languages, in particular, relational algebra and SQL, extended with additional ad-hoc operators to better support information flow, or imperative programming. This paper exploits the potential of the functional paradigm in this context, both at the level of the type-system, as well as in the higher-order features of the language.
The main contributions of this paper are:
\begin{itemize}
\item The design of \EVL: a minimal higher-order functional language with polymorphic record types, tailored to the specific tasks of event processing. 
\item A sound and complete type inference algorithm for \EVL, defined as both an extension and a restriction of the ML-style record calculus in~\cite{Ohori95}.

\item A Call-by-Value operational semantics, using evaluation contexts, for which we show the type preservation property. 

\item A comprehensive study of the \EVL higher-order/functional capabilities and its application in the context of CEP.
\end{itemize} 
This paper is a revised and extended version of \cite{AlvesFR20}, where the term language and the type system  were presented. Here we provide a formal operational semantics for the language, prove that evaluation preserves types, and illustrate with examples the expressive power of the language.

\paragraph*{Overview} In Section~\ref{sec:background} we summarise the main notions of event processing and type systems needed to make the paper self contained. In Section~\ref{sec:evl} we define the \EVL language and its set of types. In Section~\ref{sec:ta} we define a type system for \EVL and in Section~\ref{sec:cbv} we present the call-by-value operational semantics and prove type-preservation. In Section~\ref{sec:ti} we present a type inference algorithm, which is proved to be sound and complete. In Section~\ref{sec:cep} we explore \EVL's capabilities. We discuss related work in Section~\ref{sec:rw} and we finally conclude and discuss further work in Section~\ref{sec:conc}.


\section{Background}
\label{sec:background}
In this section we recall basic notions on type systems and event processing that will be used in the design of \EVL. We refer  to~\cite{Ohori95} for more details on record typing and to~\cite{Luckham02} for more details on event processing.

 \subsection{Polymorphic records}
A convenient way to construct data types is using records, which can be seen as tuples where the individual components are labeled. Record calculi are generally used to model programming features such as objects and module systems. One such record calculus is the one described by Ohori in~\cite{Ohori95}, based on a kinded quantification, which extends the standard type system for parametric polymorphism by Damas and Milner~\cite{DamasM82}, with primitives for record creation, field selection and field update. A record term is a term of the form $\{\glab_1 = M_1, \ldots, \glab_n = M_n\}$ that represents a structure with $n$ fields $\glab_1, \ldots, \glab_n$ and values $M_1, \ldots, M_n$, respectively.  The term  $\sel{M}{\glab}$ represents the selection of field $\glab$ from the structure $M$ and the term $\modif{M_1}{\glab}{M_2}$ represents changing the value $M_1$ of field $\glab$ to $M_2$.

\begin{example}
    \label{ex:selmodif}
    The two following terms are valid terms in Ohori's record calculus:
    \begin{align}
        & \abs{x y}{\letin{\textit{getName}}{\abs{z}{(\sel{z}{\textit{name}})}}{\app{\textit{getName}}{\{\textit{name} = x, \textit{address} = y\}}}} \label{eq:sel} \\
        & \abs{x y z}{\letin{\textit{update}}{\abs{x y}{\modif{x}{\textit{address}}{y}}}{\app{(\app{\textit{update}}{\{\textit{name} = x, \textit{address} = y\}})}{z}}} \label{eq:modif}
    \end{align}
\end{example}

Note that in the term~\eqref{eq:sel} we apply the function $\textit{getName}$ to the concrete structure $\{\textit{name} = x, \textit{address} = y\}$ and in the term~\eqref{eq:modif}  we apply function $update$ again to the concrete structure $\{\textit{name} = x, \textit{address} = y\}$. That being said, it is easy to see that function $\textit{getName}$ can be applied to any structure that contains the field $\textit{name}$ and that function $\textit{update}$ can be applied to any structure that contains the field $\textit{address}$. There are various ways to deal with this type of record polymorphism (e.g. qualified types, row variables...). This is achieved in~\cite{Ohori95} through the use of a system of \emph{kinds}. Quantified types (or type schemes) are of the form $\forall\alpha::\gkind.\sigma$,  where the type variable $\vtype$ is restricted by a kind $\gkind$.  A kind represents a set of types and  can be either the universal kind $\ukind$, representing all possible types, or a kind of the form $\kind{\glab_1 : \tau_1, \ldots, \glab_n : \tau_n}$ representing the types of records that have, at least, fields $\glab_1, \ldots, \glab_n$ of types $\tau_1, \ldots, \tau_n$, respectively.

 \subsection{Events}
In event processing applications, many events  have a similar structure and a similar meaning. Consider a temperature sensor: all of the events produced by it have the same kind of information, such as temperature reading, timestamp and maybe location, but with possibly different values. This relationship was formally defined in~\cite{AlvesBF15} as that between \emph{Generic} and \emph{Specific} events. 
 We now give some key notions on events that will be central to the definition of our language. We consider events as particular actions or happenings occurring at a particular time.
\begin{definition}[Event Specifications] 
\label{def:evspec}
Given a set of terms $M_1,\dots,M_n$, defined in a particular language, an \emph{event specification}, denoted $\spec$, is a term of the form $\{l_1= M_1,\dots,  l_n=M_n\},\  n > 0$, 
representing a structure with labels $l_1,\dots, l_n$ and values $M_1,\dots,M_n$ respectively. 
An event specification without term variable occurrences is called a ground event specification.
\end{definition}
In our language, event specifications  will be typed using record types, following  Ohori~\cite{Ohori95}. We distinguish between \emph{events} and \emph{generic events} (or event schemes), where the former correspond to specific happenings or occurrences and the latter represent sets of events that can occur in a particular system. 

\begin{definition}[Event] A \emph{(specific) event} is a ground event specification that represents a particular action/happening, occurring in a system. 
\end{definition}
\begin{definition}[Generic Event] A \emph{generic event} (or \emph{event scheme}) represents a set of events, defined as
$\gev[x_1,\dots,x_n] = \spec$, where $x_1,\dots,x_n$ are the variables occurring in $\spec$.
\end{definition}
Specific events $\ev$ are associated to generic events $\gev$, by an instantiation relation $\vdash_{\theta} \ev :: \gev$.  The instantiation relation can be syntactic ($\ev$ is obtained from $\gev$ by replacing the variables in $\gev$ by terms through a substitution mapping $\theta$), but can also be a semantic instantiation that may require some computation.

 \subsection{Complex Event Processing }
   The area of CEP comprises a series of techniques to deal with streams of events such as event processing, detection of patterns and relationships, filtering, transformation and abstraction, amongst others.  See~\cite{Etzion10} for a detailed reference on the area. 

Event processing agents are classified according to the actions that they perform to process incoming events. We are now going to look into the different types of event processing agents in a little more depth. All of the definitions that we are going to present can be found in~\cite{Etzion10}.

\begin{definition}[Filter event processing agent]
  A \emph{filter agent} is an event processing agent that performs filtering only, so it does not transform the input event.
\end{definition}
\begin{definition}[Transformation event processing agent]
  A \emph{transformation agent} is an event processing agent that includes a derivation step, and optionally also a filtering step.
\end{definition}
\begin{definition}[Translate event processing agent]
  A \emph{translate agent} can be used to convert events from one type to another, or to add, remove, or modify the values of an event’s attributes.
\end{definition}
\begin{definition}[Aggregate event processing agent]
 An \emph{aggregate agent} takes a stream of incoming events and produces an output event that is a map of the incoming events.
\end{definition}
\begin{definition}[Compose event processing agent]
  A \emph{compose agent} takes two streams of incoming events and processes them to produce a single output stream of events.
\end{definition}
\begin{definition}[Pattern Detect event processing agent]
  A \emph{pattern detect agent} performs a pattern matching function on one or more input streams. It emits one or more derived events if it detects an occurrence of the specified pattern in the input events.
\end{definition}
The notions of specific and generic events are also a key aspect in CEP, where instead of defining the structure of each event individually, one wants to be able  specify the structure of an entire class of events. Generic events can then be related to other (generic or specific) events through semantic relations. In~\cite{Etzion10}, these relationships where classified into four types: membership, generalization, specialization and retraction. We will show latter how these relationships can be dealt with in \EVL.
\section{The \EVL typed language}
\label{sec:evl}
In this section we introduce \EVL, a minimalistic typed language to specify events. \EVL is an  extension of the $\lambda$-calculus that includes records, a flexible data structure that is used here to deal with event specifications (Definition~\ref{def:evspec}). 
We assume some familiarity with the $\lambda$-calculus (see~\cite{Barendregt85} for a detailed reference).


We start by formally defining the set of \EVL terms. In the following,  $x,y,z,\dots$ range over a countable set of variables and $\glab,\glab_1,\dots$ range over a countable set  $\slab$ of labels. 

\begin{definition}The set of \EVL terms is given by the following grammar:
\[
\begin{array}{lcl}
    \gterm & ::= & \cterm^\btype \mid \vterm \mid \app{\gterm}{\gterm} \mid \abs{\vterm}{\gterm} \mid \cond{\gterm}{\gterm}{\gterm} \\
    & & \letw{\vterm}{\gterm}{\gterm} \mid \letEv{\vterm}{\gterm}{\gterm} \\
    & & \{\glab =\gterm, \dots, \glab=\gterm\} \mid \gterm.\glab \mid \modw{\gterm}{\glab}{\gterm}
\end{array}
\]
where $\cterm$ is a constant and $\btype$ is a constant type from the set $\stbase$ of constant types. We will assume the existence of two constants $\true$ and $\false$ of type $\booltype$.
\end{definition}

%
%

\textbf{Notation:} We will use the notation $let \ x \ \vterm_1 \dots \vterm_n = M$ and $letEv \ x \ \vterm_1 \dots \vterm_n = M$ for $let \ x = \lambda\vterm_1\dots\lambda\vterm_n.\gterm$ and $letEv \ x = \lambda\vterm_1\dots\lambda\vterm_n.\gterm$, respectively.
 As an abuse of notation, in examples, we will use more meaningful names for functions, labels and events. Furthermore, event names will always start with a capital letter, to help distinguish them from functions.

We choose not to add other potentially useful constructors to the language, for instance pairs and projections, since we are aiming at a minimal language. Nevertheless, we can easily encode pairs $(\gterm_1, \gterm_2)$ and projections $\proj_1 \gterm$ and $\proj_2 \gterm$ in our language by means of records of the form $\{\fst=\gterm_1, \snd=\gterm_2\}$ and $\gterm.\fst$, $\gterm.\snd$, respectively. This can trivially be extended to tuples in general, and we will often use this notation when writing examples.
\begin{example}In this simple example, \emph{FireDanger} reports the fire danger level of a particular location.

\begin{lstlisting}[mathescape=true, basicstyle=\small, upquote=true]
    letEv FireDanger = $\lambda$l$\lambda$d.{location = l, fire_danger = d} in
    FireDanger $\porto^\strtype$ $\low^\strtype$
\end{lstlisting}
\end{example}
To make our examples more readable, we will also use the following terms, abbreviating list construction:
\begin{lstlisting}[mathescape=true, basicstyle=\small]
    nil = {empty = $\true^\booltype$}
\end{lstlisting}
\begin{lstlisting}[mathescape=true, basicstyle=\small]
    cons x list = {empty = $\false^\booltype$, head = x, tail = list}.
\end{lstlisting}
Note that, much like what happens with tuples, the type of a particular list in this notation will be closely related to the size of the list in question. A more realistic approach is to add lists and list-types as primitive notions in the language, but, as we mentioned before, we are focusing on a minimal language.
Furthermore, we will often use constants (numbers, booleans, strings, etc) and operators (arithmetic, boolean, etc) in our examples. However, following the minimalistic approach, we do not add constants/operators to the grammar and instead use free variables to represent them. Again, in a more general approach we could extend the grammar with other data structures and operators, for numbers, booleans, lists, etc.

%
\begin{example}\label{ex:fire_check}Consider the following example illustrating the definition of a generic event \emph{FireDanger} and of a function \emph{check} that determines if there is the danger of a fire erupting in a particular location, using the weather information associated with that location. Function \emph{check}  creates an appropriate instance of  \emph{FireDanger} to report the appropriate fire danger level.
    \begin{lstlisting}[mathescape=true, basicstyle=\small]
    letEv FireDanger l d = {location = l, fire_danger = d} in
    let check x = if (x.temperature > $29.0^\floattype$ and x.wind > $32.0^\floattype$ 
                      and x.humidity < $20.0^\floattype$ and x.precipitation < $50.0^\floattype$) 
                  then FireDanger x.location $\high^\strtype$
                  else FireDanger x.location $\low^\strtype$ in
        check {temperature = $10.0^\floattype$, wind = $20.0^\floattype$, humidity = $30.0^\floattype$, 
               precipitation = $10.0^\floattype$, location = $\porto^\strtype$} 
    \end{lstlisting}
\end{example}



We now define the set of types for the \EVL language. We use record types to type labelled structures.
We assume a finite set $\stbase$ of constant types  and a countable set  $\stvar$ of type variables, and we will use $\btype ,\btype_1,\dots$, $\vtype, \vtype_1, \dots$ and $\gkind,\gkind_1,\dots$ to denote  constant types, type variables and kinds, respectively.
The set $\stbase$ of constant types will always contain the type $\bltype$.
\begin{definition} The sets of types $\ttype$ and kinds $\gkind$ are specified by the following grammar. 
  \begin{align*}
    \ttype &::= \mtype \mid \forall \vtype::\gkind.\ttype \\
    \mtype &::= \vtype \mid \btype \mid \atype{\mtype}{\mtype} \mid \{\glab:\mtype, \dots, \glab:\mtype\} \\
    \ltype &::= \vtype \mid \btype \mid \atype{\mtype}{\ltype} \\
    \etype &::= \atype{\mtype}{\{\glab:\ltype, \dots, \glab:\ltype\}} \mid \{\glab:\ltype, \dots, \glab:\ltype\}\\
    \gkind & ::= \ukind \mid \kind{\glab:\mtype, \dots, \glab:\mtype}
  \end{align*}
\end{definition}
  Following Damas and Milner’s type system, we divide the set of types into \emph{monotypes} (ranged over by $\tau$) and \emph{polytypes} (of the form $\forall \vtype::\gkind.\ttype$). More precisely,   $\ttype$ represents all types and $\mtype$ represents all monotypes. We denote by $\ltype$ (included in $\mtype$) the type of  event fields, and by $\etype$ (also included in $\mtype$) the type of  event definitions. 
  This distinction is necessary to adequately type event definitions and its purpose will become clear in the definition of the typing system.

We do not allow nested events, and to that end we clearly separate types for event definitions, denoted by $\gamma$, and which are a subset of the general types denoted by $\tau$. However, we do allow for nested records of general type. The following is an example of a term that is typed with a nested record type:
\begin{lstlisting}[mathescape=true, basicstyle=\small]
    {empty = $\false^\booltype$, head = $1^\inttype$, 
     tail = {empty = $\false^\booltype$, head = $2^\inttype$, tail = {empty = $\true^\booltype$}}}.
\end{lstlisting} 

Let $F$ range over functions from a finite set of labels to types. We write $\{F\}$ and $\kind{F}$ to denote the record type identified by $F$ and the record kind identified by $F$, respectively. For two functions $F_1$ and $F_2$ we write $F_1 \pm F_2$ for the function $F$ such that $\dom{F} = \dom{F_1} \cup \dom{F_2}$ and such that for $l \in \dom{F}$, $F(l) = F_1(l)$ if $l \in \dom{F_1}$; otherwise $F(l) = F_2(l)$.

\textbf{Notation:} Following the notation for pairs introduced above, we write  $(\ttype_1 \times \ttype_2)$ for the product type corresponding to  $\{\fst : \ttype_1, \snd : \ttype_2\}$.

A typing environment $\tenv$ is a set of statements $\vterm : \ttype$ where all subjects $x$ are distinct. We write $\dom{\tenv}$ to denote the domain of a typing environment $\tenv = \{\vterm_1 : \ttype_1, \dots, \vterm_n : \ttype_n\}$, which is the set  $\{\vterm_1, \dots, \vterm_n\}$. The type of a variable $\vterm_i \in \dom{\tenv}$ is $\tenv(\vterm_i) = \ttype_i$, and we write $\tenv_x$ to denote $\tenv\setminus \{x:\tenv(x)\}$. A kinding environment $\kenv$ is a set of statements $\vtype::\gkind$. Similarly, the domain of a kinding environment $\kenv = \{\vtype_1::\gkind_1, \dots, \vtype_n::\gkind_n\}$, denoted $\dom{K}$, is the set $\{\vtype_1, \dots, \vtype_n\}$ and the kind of a type $\vtype_i \in \dom{\kenv}$ is $\kenv(\vtype_i) = \gkind_i$.
A type variable $\alpha$ occurring in a type/kind is bound, if it occurs under the scope of a $\forall$-quantifier on $\alpha$, otherwise it is free. We denote by $\ftv{\ttype}$ ($\ftv{\gkind}$) the set of free variables of $\ttype$ (respectively, $\gkind$). We say that a type $\ttype$ and a kind $\gkind$ are well-formed under a kinding environment $\kenv$ if $\ftv{\ttype} \subseteq \dom{K}$ and $\ftv{\gkind} \subseteq \dom{K}$, respectively. A typing environment $\tenv$ is well-formed under a kinding environment $\kenv$, if $\forall \vterm \in \dom{\tenv}$, $\tenv(\vterm)$ is well-formed under $\kenv$. A kinding environment $\kenv$ is well-formed, if $\forall \vtype \in \dom{K}, \ftv{\kenv(\vtype)} \subseteq \dom{K}$. This reflects the fact that every free type variable in an expression has to be restricted by a kind in the kinding environment. Therefore, every type variable is either restricted by the kind in the type scheme or by a kind in the kinding environment.

Furthermore, we consider the set of essentially free type variables of a type $\ttype$ under a kinding environment $\kenv$ (denoted as $\eftv{\kenv}{\ttype}$) as the smallest set such that, $\ftv{\ttype} \subseteq \eftv{\kenv}{\ttype}$ and if $\vtype \in \eftv{K}{\ttype}$, then $\ftv{K(\vtype)} \subseteq \eftv{K}{\ttype}$. This reflects the fact that a type variable $\vtype$ is essentially free in $\ttype$ under a kinding environment $\kenv$, if $\vtype$ is free in $\ttype$ or in a restriction in $\kenv$.

\begin{definition}
Let $\mtype$ be a monotype, $\gkind$ a kind, and $\kenv$ a kinding environment. Then we say that $\mtype$ has kind $\gkind$ under $\kenv$ (written $\kenv \Vdash \mtype::\gkind$), if $\mtype::\gkind$ can be obtain by applying the following rules:
\begin{align*}
  \kenv &\Vdash \mtype::\ukind \ \text{for all} \ \mtype \ \text{well-formed under} \ \kenv \\
  \kenv &\Vdash \vtype::\kind{l_1:\mtype_1, \dots, l_n:\mtype_n} \ \text{if} \ \kenv(\vtype) = \kind{\glab_1 : \mtype_1, \dots, \glab_n:\mtype_n,\dots} \\
  \kenv &\Vdash \{\glab_1:\mtype_1, \dots, \glab_n:\mtype_n, \dots\}::\kind{\glab_1:\mtype_1, \dots, \glab_n:\mtype_n} \\
        &\quad{} \text{if} \ \{\glab_1:\mtype_1, \dots, \glab_n:\mtype_n, \dots\} \ \text{is well-formed under} \ \kenv
\end{align*}
Note that, if $\kenv \Vdash \ttype :: \gkind$, then $\ttype$ and $\gkind$ are well-formed under $\kenv$.
\end{definition}
\begin{example}
Let $\mtype = \atype{\vtype_1}{\{\glab_2:\itype, \glab_3:\ctype{\vtype_2}{\vtype_3}\}}$. Then, $\mtype$ is well-formed under $\kenv_1 = \{\vtype_1::\ukind, \vtype_2::\ukind, \vtype_3::\ukind\}$, because $\ftv{\mtype} \subseteq \dom{\kenv_1}$, but not under $\kenv_2 = \{\vtype_1::\ukind, \vtype_2::\ukind\}$, because $\vtype_3 \not \in \dom{\kenv_2}$, and, therefore, $\ftv{\mtype} \not \subseteq \dom{\kenv_2}$. Because $\mtype$ is well-formed under $\kenv_1$, we can write  $\kenv_1 \Vdash \mtype :: \ukind$.
\end{example}

\section{Type assignment}
\label{sec:ta}
We now define how types are assigned to \EVL terms. Because we are dealing with polymorphic type schemes, we need to define the notion of \emph{generic instance} for which we first need to discuss well-formed substitutions. 

A substitution $\subs = [\ttype_1/\vtype_1, \dots, \ttype_n/\vtype_n]$ is  \emph{well-formed under a kinding environment} $\kenv$, if for all $\vtype \in \dom{\subs}$, $\subs(\vtype)$ is well-formed under $\kenv$. This reflects the fact that applying a substitution to a type that is well-formed under a kinding environment $K$, should result in a type that is also well-formed under $K$. A \emph{kinded substitution} is a pair $(\kenv, \subs)$ of a kind assignment $\kenv$ and a substitution $S$ that is well-formed under $\kenv$. This reflects the fact that a substitution $\subs$ should only be applied to a type that is well-formed under $\subs$, such that the resulting type is kinded by $\kenv$.
\begin{example}
Let $\subs = [\alpha_2/\alpha_1]$ be a substitution. Then $\dom{\subs} = \{\alpha_1\}$, $\subs(\alpha_1) = \alpha_2$, and $\ftv{\alpha_2} = \{\alpha_2\}$. For the kinding environment $\kenv_1 = \{\alpha_2 :: \gkind\}$, we have that $\subs$ is well-formed under $\kenv_1$, since $\alpha_2 \in \dom{\kenv_1}$. On the other hand, for the kinding environment $\kenv_2 = \{\alpha_3 :: \gkind\}$, we have that $\subs$ is not well-formed under $\kenv_2$, since $\alpha_2 \not\in \dom{\kenv_2}$.
\end{example}
\begin{definition}
  We say that a kinded substitution $(\kenv_1, \subs)$ respects a kinding environment $\kenv_2$, if $\forall \alpha \in \dom{\kenv_2}, \kenv_1 \Vdash \subs(\alpha) :: \subs(\kenv_2(\alpha))$.
\end{definition}

\begin{example}
Let $\kenv_1 = \{\vtype_1 :: \kind{\glab_1 : \vtype_2}, \vtype_2::\ukind\}$ and $S = [\{\glab_1 : \itype\}/\vtype_1]$. Then, the restricted substitution $(\kenv_1, \subs)$ respects $\kenv_2 = \{\vtype_1 :: \kind{\glab_1 : \itype}\}$,
because for $\dom{\kenv_2} = \{\vtype_1\}$, we have:
\begin{align*}
  \kenv_1 &\Vdash \subs(\vtype_1) :: \subs(\kenv_2(\vtype_1)) \\
  \kenv_1 &\Vdash \subs(\vtype_1) :: \subs(\kind{\glab_1 : \itype}) \\
   \kenv_1 &\Vdash \subs(\vtype_1) :: \kind{\glab_1 : \subs(\itype)} \\
    \kenv_1 &\Vdash \{\glab_1 : \itype\} :: \kind{\glab_1 : \itype}
\end{align*}
\end{example}
\begin{lemma}
  \label{lem0}
  If $\ftv{\ttype} \subseteq \dom{\kenv}$ and $(\kenv_1, \subs)$ respects $\kenv$, then $\ftv{\subs(\ttype)} \subseteq \dom{\kenv_1}$.
\end{lemma}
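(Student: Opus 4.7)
The plan is to proceed by structural induction on $\ttype$, following the grammar that splits types into monotypes $\mtype$ and polytypes $\forall\vtype::\gkind.\ttype'$. The single crucial fact is that the respect condition on $(\kenv_1,\subs)$ ensures, for each $\alpha\in\dom{\kenv}$, that $\kenv_1\Vdash \subs(\alpha)::\subs(\kenv(\alpha))$, and by the remark made just after the definition of $\Vdash$, this forces $\subs(\alpha)$ to be well-formed under $\kenv_1$, i.e.\ $\ftv{\subs(\alpha)}\subseteq\dom{\kenv_1}$. Everything else is bookkeeping on top of this observation.

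The base cases are immediate: for a constant type $\btype$ we have $\ftv{\subs(\btype)}=\emptyset$, and for a type variable $\vtype$ the hypothesis $\ftv{\vtype}=\{\vtype\}\subseteq\dom{\kenv}$ together with the observation above yields $\ftv{\subs(\vtype)}\subseteq\dom{\kenv_1}$. For the monotype constructors $\atype{\mtype_1}{\mtype_2}$ and $\{\glab_1:\mtype_1,\dots,\glab_n:\mtype_n\}$, free variables and substitution both distribute componentwise, so from $\ftv{\ttype}\subseteq\dom{\kenv}$ we get the same premise for each component, and the induction hypothesis closes the case.

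The main obstacle will be the polytype case $\ttype=\forall\vtype::\gkind.\ttype'$, where I have to handle the bound variable cleanly. By $\alpha$-conversion, choose $\vtype$ fresh, so that $\vtype\notin\dom{\subs}\cup\dom{\kenv}\cup\dom{\kenv_1}$; then $\subs(\ttype)=\forall\vtype::\subs(\gkind).\subs(\ttype')$. Form the extended environments $\kenv^+ = \kenv\cup\{\vtype::\gkind\}$ and $\kenv_1^+ = \kenv_1\cup\{\vtype::\subs(\gkind)\}$. The freshness of $\vtype$ guarantees that $(\kenv_1^+,\subs)$ still respects $\kenv^+$, and that the premise $\ftv{\ttype'}\subseteq\dom{\kenv^+}$ holds (and similarly $\ftv{\gkind}\subseteq\dom{\kenv}$, since $\gkind$ must be well-formed under $\kenv$ for $\ttype$ to be). Applying the induction hypothesis to $\ttype'$ and to each monotype appearing in $\gkind$ gives $\ftv{\subs(\ttype')}\subseteq\dom{\kenv_1^+}$ and $\ftv{\subs(\gkind)}\subseteq\dom{\kenv_1}$. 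Taking $\ftv{\subs(\ttype)} = (\ftv{\subs(\ttype')}\cup\ftv{\subs(\gkind)})\setminus\{\vtype\}$ discharges the fresh $\vtype$ and yields the desired inclusion in $\dom{\kenv_1}$.
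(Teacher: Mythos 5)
The paper states Lemma~\ref{lem0} without proof, so there is no official argument to compare against; your structural induction is the natural way to discharge it and is correct. The one load-bearing step --- that the respect condition gives $\kenv_1 \Vdash \subs(\vtype) :: \subs(\kenv(\vtype))$ for \emph{every} $\vtype \in \dom{\kenv}$ (not just those in $\dom{\subs}$), and hence $\ftv{\subs(\vtype)} \subseteq \dom{\kenv_1}$ via the well-formedness remark --- is exactly the right observation, and the remaining cases, including the $\alpha$-renaming in the polytype case, are routine bookkeeping as you say.
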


\begin{lemma}
    \label{lem1}
    If $\kenv \vdash \ttype :: \gkind$, and a kinded substitution $(\kenv_1,\subs)$ respects $\kenv$, then $\kenv_1 \Vdash \subs(\ttype) :: \subs(\gkind)$.
\end{lemma}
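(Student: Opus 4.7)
The plan is to read the hypothesis $\kenv \vdash \ttype :: \gkind$ as the kinding judgment $\kenv \Vdash \ttype :: \gkind$ just defined (no other $::$-judgment has been introduced, so the single turnstile appears to be a typographical variant), and to prove the statement by case analysis on the derivation, which reduces to the three defining clauses of $\Vdash$. Two general facts are used repeatedly: substitutions act pointwise on record types and record kinds, so $\subs(\kind{\glab_1{:}\mtype_1,\dots,\glab_n{:}\mtype_n}) = \kind{\glab_1{:}\subs(\mtype_1),\dots,\glab_n{:}\subs(\mtype_n)}$; and Lemma~\ref{lem0} transports well-formedness from $\kenv$ to $\kenv_1$, since $\ftv{\ttype} \subseteq \dom{\kenv}$ whenever $\kenv \Vdash \ttype :: \gkind$, and $(\kenv_1,\subs)$ respects $\kenv$.

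For the universal-kind clause, $\gkind = \ukind$, we have $\subs(\ukind) = \ukind$, so the goal reduces to showing that $\subs(\ttype)$ is well-formed under $\kenv_1$, which is exactly Lemma~\ref{lem0}. For the variable clause, $\ttype = \vtype$ with $\kenv(\vtype) = \kind{\glab_1{:}\mtype_1,\dots,\glab_n{:}\mtype_n,\dots}$ and $\gkind = \kind{\glab_1{:}\mtype_1,\dots,\glab_n{:}\mtype_n}$: because $(\kenv_1,\subs)$ respects $\kenv$ and $\vtype \in \dom{\kenv}$, the definition of ``respects'' gives $\kenv_1 \Vdash \subs(\vtype) :: \subs(\kenv(\vtype))$, and pushing $\subs$ pointwise through $\kenv(\vtype)$ together with the second defining clause of $\Vdash$ (now read in $\kenv_1$) yields $\kenv_1 \Vdash \subs(\vtype) :: \kind{\glab_1{:}\subs(\mtype_1),\dots,\glab_n{:}\subs(\mtype_n)} = \subs(\gkind)$.

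For the record clause, $\ttype = \{\glab_1{:}\mtype_1,\dots,\glab_n{:}\mtype_n,\dots\}$: applying $\subs$ pointwise produces $\subs(\ttype) = \{\glab_1{:}\subs(\mtype_1),\dots,\glab_n{:}\subs(\mtype_n),\dots\}$, and Lemma~\ref{lem0} ensures this is well-formed under $\kenv_1$, so the third clause of the definition directly gives $\kenv_1 \Vdash \subs(\ttype) :: \kind{\glab_1{:}\subs(\mtype_1),\dots,\glab_n{:}\subs(\mtype_n)} = \subs(\gkind)$. The only real source of friction is the bookkeeping around the ellipses representing the extra, unnamed fields that a record type or kind may carry beyond those listed in $\gkind$: I would need to state carefully that substitution acts on those fields as well, that their free type variables also lie in $\dom{\kenv}$, and hence (via Lemma~\ref{lem0}) in $\dom{\kenv_1}$. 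No structural induction is in fact required — each case closes under essentially one application of a defining clause of $\Vdash$.
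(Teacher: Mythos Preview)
Your proposal is correct and follows the same three-case structure as the paper's proof (one case per defining clause of $\Vdash$). The variable and record-type cases are handled essentially identically in both. The one genuine difference is in the universal-kind case: the paper performs an inner structural induction on $\mtype$ (separate subcases for $\btype$, $\vtype$, arrow, and record types), whereas you dispatch the whole case with a single appeal to Lemma~\ref{lem0}. Your route is the cleaner one: well-formedness of $\mtype$ under $\kenv$ is precisely $\ftv{\mtype}\subseteq\dom{\kenv}$, and Lemma~\ref{lem0} then gives $\ftv{\subs(\mtype)}\subseteq\dom{\kenv_1}$, which is all that $\kenv_1\Vdash\subs(\mtype)::\ukind$ requires. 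In fact the paper's structural induction is somewhat redundant --- its record subcase already invokes Lemma~\ref{lem0} directly, so the other subcases are just re-deriving instances of that lemma by hand. One small point worth tightening in your variable-clause argument: from $\kenv_1\Vdash\subs(\vtype)::\subs(\kenv(\vtype))$ you need a kind-weakening step (dropping the extra fields of $\subs(\kenv(\vtype))$ to reach $\subs(\gkind)$), which requires a tiny case split on whether $\subs(\vtype)$ is itself a type variable or a record type; the paper glosses over this too, but it is the one place where ``the second defining clause'' alone does not quite close the gap.
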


\begin{proof}
  Since $\kenv \Vdash \ttype :: \gkind$, then it is of the form:
\begin{align*}
    \kenv &\Vdash \mtype::\ukind \ \text{for all} \ \mtype \ \text{well formed under} \ \kenv \\
    \kenv &\Vdash \vtype::\kind{l_1:\mtype_1, \dots, l_n:\mtype_n} \ \text{if} \ \kenv(\vtype) = \kind{\glab_1 : \mtype_1, \dots, \glab_n:\mtype_n, \dots} \\
    \kenv &\Vdash \{\glab_1:\mtype_1, \dots, \glab_n:\mtype_n, \dots\}::\kind{\glab_1:\mtype_1, \dots, \glab_n:\mtype_n} \\
          &\quad{} \text{if} \ \{\glab_1:\mtype_1, \dots, \glab_n:\mtype_n, \dots\} \ \text{is well formed under} \ \kenv
\end{align*}
\begin{itemize}  
\item  First, let us consider the case where $\ttype :: \gkind$ is of the form $\mtype :: \ukind$ and $\ttype$ is well formed under $\kenv$. We want to show that $\kenv_1 \Vdash \subs(\mtype) :: \subs(\ukind)$, \textit{i.e.} $\kenv_1 \Vdash \subs(\mtype) :: \ukind$. We can do this by induction on the structure of $\mtype$. 
  \begin{itemize}
  \item $\mtype = \btype$. Trivial.
  \item $\mtype = \vtype$. We want to show that $\kenv_1 \Vdash \subs(\vtype) :: \subs(\ukind)$. We know that $\kenv_1(\vtype) = \ukind$ and that $(\kenv_1, \subs)$ respects $\kenv_1$. Now, if $\vtype \in \dom{\subs}$, then we know that $\subs(\vtype)$ is well formed under $\kenv_1$, and, therefore, that $\kenv_1 \Vdash \subs(\vtype) :: \subs(\ukind)$, if $\vtype \not\in \dom{\subs}$, then, since $\vtype \in \dom{\kenv}$, and we know that $\kenv_1 \Vdash \vtype :: \ukind$, then we have that $\kenv_1 \Vdash \subs(\vtype) :: \ukind$.
  \item  $\mtype = \{\glab_1 : \mtype_1, \dots, \glab_n : \mtype_n\}$. Since $\{\glab_1 : \mtype_1, \dots, \glab_n : \mtype_n\}$ is well formed under $\kenv$, then we know that $\ftv{\{\glab_1 : \mtype_1, \dots, \glab_n : \mtype_n\}} = \ftv{\mtype_1} \cup \cdots \cup \ftv{\mtype_n} \subseteq \dom{\kenv}$. By Lemma \ref{lem0}, we have that $\ftv{\subs(\mtype_1)} \cup \cdots \cup \ftv{\subs(\mtype_n)} \subseteq \dom{\kenv_1}$, \textit{i.e.} $\ftv{\subs(\{\glab_1 : \mtype_1, \dots, \glab_n : \mtype_n\})} \subseteq \dom{\kenv_1}$, therefore $\subs(\{\glab_1 : \mtype_1, \dots, \glab_n : \mtype_n\})$ is well formed under $\kenv_1$, and $\kenv_1 \Vdash \subs(\{\glab_1 : \mtype_1, \dots, \glab_n : \mtype_n\}) :: \ukind$. 
  \item $\mtype = \atype{\mtype_1}{\mtype_2}$. We want to show that $\subs(\atype{\mtype_1}{\mtype_2})$ is well formed under $\kenv_1$, and, therefore, $\kenv_1 \Vdash \subs(\atype{\mtype_1}{\mtype_2}) :: \ukind$. We know that $\kenv \Vdash \atype{\mtype_1}{\mtype_2} :: \ukind$ and $(\kenv_1, \subs)$ respects $\kenv$. By the induction hypothesis, we know that $\kenv \Vdash \mtype_1 :: \gkind_1$, \textit{i.e.} $\kenv_1 \Vdash \subs(\mtype_1) :: \subs(\gkind_1)$ and $\kenv \Vdash \mtype_2 :: \gkind_2$, \textit{i.e.} $\kenv_1 \Vdash \subs(\mtype_2) :: \subs(\gkind_2)$, therefore, both $\subs(\mtype_1)$ and $\subs(\mtype_2)$ are well formed under $\kenv_1$, which, in turn, means that $\ftv{\subs(\mtype_1)} \subseteq \dom{\kenv_1}$ and $\ftv{\subs(\mtype_2)} \subseteq \dom{\kenv_1}$. But, then $\ftv{\subs(\mtype)} \cup \ftv{\subs(\mtype_2)} \subseteq \dom{\kenv_1}$, \textit{i.e.} $\ftv{\subs(\atype{\mtype_1}{\mtype_2})} \subseteq \dom{\kenv_1}$, which means that $\subs(\atype{\mtype_1}{\mtype_2})$ is well formed under $\kenv_1$, and, therefore, $\kenv_1 \Vdash \subs(\atype{\mtype_1}{\mtype_2}) :: \ukind$. 
\end{itemize}
  \item $\ttype :: \gkind$ is of the form $\vtype :: \kind{\glab_1 : \mtype_1, \dots, \glab_n : \mtype_n}$, and $\kenv(\vtype) = \kind{\glab_1 : \mtype_1, \dots, \glab_n : \mtype_n, \dots}$. We want to show that $\kenv_1 \Vdash \subs(\vtype) :: \subs(\kind{\glab_1 : \mtype_1, \dots, \glab_n : \mtype_n})$. Since $(\kenv_1, \subs)$ respects $\kenv$, we know that $\forall \vtype' \in \dom{\kenv}$, $\kenv_1 \Vdash \subs(\vtype') :: \subs(\kenv(\vtype'))$. But, since we know that $\vtype \in \dom{\kenv}$ and $\kenv(\vtype) = \kind{\glab_1 : \mtype_1, \dots, \glab_n : \mtype_n, \dots}$, then, if we take $\vtype' = \vtype$, we have that $\kenv_1 \Vdash \subs(\vtype) :: \subs(\kind{\glab_1 : \mtype_1, \dots, \glab_n : \mtype_n})$. 
  \item $\ttype :: \gkind$ is of the form $\{\glab_1 : \mtype_1, \dots, \glab_n : \mtype_n, \dots\} :: \kind{\glab_1 : \ltype_1, \dots, \glab_n : \mtype_n}$ and $\{\glab_1 : \mtype_1, \dots, \glab_n : \mtype_n, \dots\}$ is well formed under $\kenv$. We want to show that $\kenv_1 \Vdash \subs(\{\glab_1 : \mtype_1, \dots, \glab_n : \mtype_n, \dots\}) :: \subs(\kind{\glab_1 : \mtype_1, \dots, \glab_n : \mtype_n})$, which only happens if $\subs(\{\glab_1 : \mtype_1, \dots, \glab_n : \mtype_n, \dots\})$ is well formed under $\kenv_1$. Since $\{\glab_1 : \mtype_1, \dots, \glab_n : \mtype_n, \dots\}$ is well formed under $\kenv$, we know that $\ftv{\{\glab_1 : \mtype_1, \dots, \glab_n : \mtype_n, \dots\}} = \ftv{\mtype_1} \cup \cdots \cup \ftv{\mtype_n} \cup \cdots \subseteq \dom{\kenv}$. Now, by Lemma \ref{lem0} we have that $\ftv{\subs(\mtype_1)} \cup \dots \cup \ftv{\subs(\mtype_n)} \cup \dots \subseteq \dom{\kenv_1}$, \textit{i.e.} $\ftv{\subs(\{\glab_1 : \mtype_1, \dots, \glab_n : \mtype_n, \dots\})} \subseteq \dom{\kenv_1}$, and, therefore, $\subs(\{\glab_1 : \mtype_1, \dots, \glab_n : \mtype_n, \dots\})$ is well formed under $\kenv_1$, and $\kenv_1 \Vdash \subs(\{\glab_1 : \mtype_1, \dots, \glab_n : \mtype_n, \dots\}) :: \subs(\kind{\glab_1 : \mtype_1, \dots, \glab_n : \mtype_n})$.
  \end{itemize}
  \vspace{-0.3in}
\end{proof}

\begin{definition}\label{def:geninst}
Let $\ttype_1$ be a well-formed type under a kinding environment $\kenv$. Then, $\ttype_2$ is a generic instance of $\ttype_1$ under $\kenv$ (denoted as $\kenv \Vdash \ttype_1 \ge \ttype_2$), if $\ttype_1 = \forall \vtype_1::\gkind^1_1 \cdots \forall \vtype_n::\gkind^1_n.\mtype_1$, $\ttype_2 = \forall\beta_1::\gkind^2_1 \cdots \forall\beta_m::\gkind^2_m.\mtype_2$, and there exists a substitution $\subs$ such that $\dom{\subs} = \{\vtype_1, \dots, \vtype_n\}$, $(\kenv \cup \{\beta_1::\gkind^2_1, \dots, \beta_m::\gkind^2_m\}, \subs)$ respects $\kenv \cup \{\vtype_1::\gkind^1_1, \dots, \vtype_n::\gkind^1_n\}$, and $\mtype_2 = \subs(\mtype_1)$.
\end{definition}



\begin{definition}\label{def:closure}
  Let $\tenv$ be a typing environment and $\mtype$ be a type, both well-formed under a kinding environment $\kenv$. The closure of $\mtype$ under $\tenv$ and $\kenv$ (denoted as $\cls{\kenv}{\tenv}{\mtype}$) is a pair $(\kenv', \forall \vtype_1::\gkind_1 \cdots \forall \vtype_n::\gkind_n.\mtype)$ such that $\kenv' \cup \{\vtype_1 :: \gkind_1, \dots \vtype_n :: \gkind_n\} = \kenv$ and $\{\vtype_1, \dots, \vtype_n\} = \eftv{\kenv}{\mtype} \setminus \eftv{\kenv}{\tenv}$.
\end{definition}

\begin{example}
Let $\kenv = \{\vtype_2::\ukind, \vtype_3::\ukind, \vtype_4::\ukind, \vtype_1::\kind{l_1:\vtype_2}\}$, $\tenv = \{\vterm : \vtype_1\}$, and $\mtype = \atype{\{l_1:\vtype_2, l_4:\booltype\}}{\{l_2:\inttype, l_3: \ctype{\vtype_3}{\vtype_4}\}}$. Then $\cls{\kenv}{\tenv}{\mtype} = (\{\vtype_2::\ukind, \vtype_1::\kind{l_1:\vtype_2}\}, \forall \vtype_3::\ukind.\forall \vtype_4::\ukind.\atype{\{l_1:\vtype_2, l_4:\booltype\}}{\{l_2:\inttype, l_3:\ctype{\vtype_3}{\vtype_4}\}})$, because $\kenv = \{\vtype_2::\ukind, \vtype_1::\kind{l_1:\vtype_2}\} \cup \{\vtype_3::\ukind, \vtype_4::\ukind\}$, $\eftv{\kenv}{\mtype} = \{\vtype_2, \vtype_3, \vtype_4, \vtype_1\}$,  $\eftv{\kenv}{\tenv} = \{\vtype_1, \vtype_2\}$, and $\eftv{\kenv}{\mtype} \setminus \eftv{\kenv}{\tenv} = \{\vtype_2, \vtype_3, \vtype_4, \vtype_1\} \setminus \{\vtype_1, \vtype_2\} = \{\vtype_3, \vtype_4\}$.
\end{example}

The type assignment system for \EVL is given in Figure~\ref{fig:typesystem}, and can be seen as both a restriction and an extension of the Ohori type system for record types. Unlike Ohori, we do not deal with variant types in this system, but we have additional language constructors, like conditionals and explicit event definition. We use $\kenv,\tenv \vdash \gterm: \sigma$ to denote that the \EVL term $\gterm$ has type $\sigma$ given the type and kind environments $\tenv$ and $\kenv$, respectively.

\begin{figure}[htbp]
\begin{prooftree}
  \AxiomC{ \textit{$\tenv$ is well-formed under $\kenv$}}
  \RightLabel{(Const)}
  \UnaryInfC{$\kenv, \tenv \vdash \cterm^\btype : \btype$}
\end{prooftree}

\begin{prooftree}
  \AxiomC{$\kenv \Vdash \tenv(\vterm) \geq \mtype$, \textit{$\tenv$ is well-formed under $\kenv$}}
  \RightLabel{(Var)}
  \UnaryInfC{$\kenv, \tenv \vdash \vterm : \mtype$}
\end{prooftree}

\begin{prooftree}
  \AxiomC{$\kenv, \tenv \vdash \gterm_1 : \atype{\mtype_1}{\mtype_2}$}
  \AxiomC{$\kenv, \tenv \vdash \gterm_2 : \mtype_1$}
  \RightLabel{(App)}
  \BinaryInfC{$\kenv, \tenv \vdash \app{\gterm_1}{\gterm_2} : \mtype_2$}
\end{prooftree}

\begin{prooftree}
  \AxiomC{$\kenv, \tenv_x \cup \{\vterm : \mtype_1\} \vdash \gterm : \mtype_2$}
  \RightLabel{(Abs)}
  \UnaryInfC{$\kenv, \tenv_{\vterm} \vdash \abs{\vterm}{\gterm} : \atype{\mtype_1}{\mtype_2}$}
\end{prooftree}

\begin{prooftree}
  \AxiomC{$\kenv, \tenv \vdash \gterm_1 : \bltype$}
  \AxiomC{$\kenv, \tenv \vdash \gterm_2 : \mtype$}
  \AxiomC{$\kenv, \tenv \vdash \gterm_3 : \mtype$}
  \RightLabel{(Cond)}
  \TrinaryInfC{$\kenv, \tenv \vdash \cond{\gterm_1}{\gterm_2}{\gterm_3} : \mtype$}
\end{prooftree}

\begin{prooftree}
  \AxiomC{$\kenv', \tenv_{\vterm} \vdash \gterm_1 : \mtype'$}
  \AxiomC{$\cls{\kenv'}{\tenv_{\vterm}}{\mtype'} = (\kenv, \ttype)$}
  \AxiomC{$\kenv, \tenv_{\vterm} \cup \{\vterm : \ttype\} \vdash \gterm_2 : \mtype$}
  \RightLabel{(Let)}
  \TrinaryInfC{$\kenv, \tenv_{\vterm} \vdash \letw{\vterm}{\gterm_1}{\gterm_2} : \mtype$}
\end{prooftree}

\begin{prooftree}
  \AxiomC{$\kenv', \tenv_{\vterm} \vdash \gterm_1 : \etype$}
  \AxiomC{$\cls{\kenv'}{\tenv_{\vterm}}{\etype} = (\kenv, \ttype)$}
  \AxiomC{$\kenv, \tenv_{\vterm} \cup \{\vterm : \ttype\} \vdash \gterm_2 : \mtype$}
  \RightLabel{(LetEv)}
  \TrinaryInfC{$\kenv, \tenv_{\vterm} \vdash \letEv{\vterm}{\gterm_1}{\gterm_2} : \mtype$}
\end{prooftree}

\begin{prooftree}
  \AxiomC{$\kenv, \tenv \vdash \gterm_i : \mtype_i, 1 \le i \le n$}
  \RightLabel{(Rec)}
  \UnaryInfC{$\kenv, \tenv \vdash \{l_1 = \gterm_1, \dots, l_n = \gterm_n\} : \{l_1 : \mtype_1, \dots, l_n : \mtype_n\}, n \ge 1$}
\end{prooftree}

\begin{prooftree}
  \AxiomC{$\kenv, \tenv \vdash \gterm : \mtype'$}
  \AxiomC{$\kenv \Vdash \mtype' :: \kind{l:\mtype}$}
  \RightLabel{(Sel)}
  \BinaryInfC{$\kenv, \tenv \vdash \gterm.l : \mtype$}
\end{prooftree}

\begin{prooftree}
  \AxiomC{$\kenv, \tenv \vdash \gterm_1 : \mtype$}
  \AxiomC{$\kenv, \tenv \vdash \gterm_2 : \mtype'$}
  \AxiomC{$\kenv \Vdash \mtype :: \kind{l:\mtype'}$}
  \RightLabel{(Modif)}
  \TrinaryInfC{$\kenv, \tenv \vdash \modw{\gterm_1}{l}{\gterm_2} : \mtype$}
\end{prooftree}
\caption{Type assignment system for \EVL}
\label{fig:typesystem}
\end{figure}

\begin{example}
  \label{ex:typesystem}Let $\shterm = \{location = l, fire\_danger = d\}$, $\shtype_1 = \{location : \alpha_1, fire\_danger : \alpha_2\}$, $\shtype_2 = \forall\alpha_1 :: \ukind.\forall\alpha_2 :: \ukind.\atype{\alpha_1}{\atype{\alpha_2}{\shtype_1}}$, $\shtype_3 = \{location : \strtype, fire\_danger : \strtype\}$, and $\shtype_4 = \atype{\strtype}{\atype{\strtype}{\shtype_3}}$.
  In Figure~\ref{fig:typederiv} we give a type derivation for: $$\letEv{FireDanger}{\abs{l}{\abs{d}{\shterm}}}{FireDanger \ \porto^\strtype \ \low^\strtype}.$$
  
\end{example}
\begin{figure}[htbp]
{\footnotesize
 \begin{prooftree}
  \AxiomC{}
  \RightLabel{(Var)}
  \UnaryInfC{$\{\alpha_1 :: \ukind, \alpha_2 :: \ukind\}, \{l : \alpha_1, d : \alpha_2\} \ \cup \ \tenv \vdash l : \alpha_1$}
  \AxiomC{}
  \RightLabel{(Var)}
  \UnaryInfC{$\{\alpha_1 :: \ukind, \alpha_2 :: \ukind\}, \{l : \alpha_1, d : \alpha_2\} \ \cup \ \tenv \vdash d : \alpha_2$}
  \RightLabel{(Rec)}
  \BinaryInfC{$\{\alpha_1 :: \ukind, \alpha_2 :: \ukind\}, \{l : \alpha_1, d : \alpha_2\} \ \cup \ \tenv \vdash \shterm : \shtype_1$}
  \RightLabel{(Abs)}
  \UnaryInfC{$\{\alpha_1 :: \ukind, \alpha_2 :: \ukind\}, \{l : \alpha_1\} \ \cup \ \tenv \vdash \abs{d}{\shterm} : \atype{\alpha_2}{\shtype_1}$}
  \RightLabel{(Abs)}
 \LeftLabel{$\Phi_1 = $}
  \UnaryInfC{$\{\alpha_1 :: \ukind, \alpha_2 :: \ukind\}, \tenv \vdash \abs{l}{\abs{d}{\shterm}} : \atype{\alpha_1}{\atype{\alpha_2}{\shtype_1}}$}
\end{prooftree}

\begin{prooftree}
    \AxiomC{$\Phi_2 = \cls{\{\alpha_1 :: \ukind, \alpha_2 :: \ukind\}}{\tenv}{\atype{\alpha_1}{\atype{\alpha_2}{\shtype_1}}} = (\{\}, \shtype_2)$}
\end{prooftree}

\begin{prooftree}
    \AxiomC{}
    \RightLabel{(Var)}
    \UnaryInfC{$\{\}, \{FireDanger : \shtype_2\} \vdash FireDanger : \shtype_4$}
    \AxiomC{}
    \RightLabel{(Const)}
    \UnaryInfC{$\{\}, \{FireDanger : \shtype_2\} \vdash \porto^\strtype : \strtype$}
    \RightLabel{(App)}
    \LeftLabel{$\Phi_3 =$}
    \BinaryInfC{$\{\}, \{FireDanger : \shtype_2\} \vdash FireDanger \ \porto^\strtype : \atype{\strtype}{\shtype_3}$}
\end{prooftree}

\begin{prooftree}
    \AxiomC{$\Phi_3$}
    \AxiomC{}
    \RightLabel{(Const)}
    \UnaryInfC{$\{\}, \{FireDanger : \shtype_2\} \vdash \low^\strtype : \strtype$}
    \RightLabel{(App)}
    \LeftLabel{$\Phi_4 =$}
    \BinaryInfC{$\{\}, \{FireDanger : \shtype_2\} \ \cup \ \tenv \vdash FireDanger \ ``Porto" \ ``low" : \shtype_3$}
\end{prooftree}

\begin{prooftree}
    \AxiomC{$\Phi_1$}
    \AxiomC{$\Phi_2$}
    \AxiomC{$\Phi_4$}
    \RightLabel{(LetEv)}
    \TrinaryInfC{$\{\}, \tenv \vdash \letEv{FireDanger}{\abs{l}{\abs{d}{\shterm}}}{FireDanger \ ``Porto" \ ``low"} : \shtype_3$}
\end{prooftree}

}
\caption{A type derivation for the \EVL term $\letEv{FireDanger}{\abs{l}{\abs{d}{M}}}{FireDanger \ \porto^\strtype \ \low^\strtype}$
}
\label{fig:typederiv}
\end{figure}

\begin{lemma}\label{lem2}
  If $\kenv, \tenv \vdash \gterm : \ttype$ and $(\kenv_1, \subs)$ respects $\kenv$, then $\kenv_1, \subs(\tenv) \vdash \gterm : \subs(\ttype)$.
\end{lemma}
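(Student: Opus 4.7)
The plan is to proceed by structural induction on the derivation of $\kenv, \tenv \vdash \gterm : \ttype$, analyzing each of the nine typing rules in Figure~\ref{fig:typesystem}. In every case I need two ingredients: (i) that $\subs(\tenv)$ is well-formed under $\kenv_1$, which follows from Lemma~\ref{lem0} applied pointwise to each declaration in $\tenv$, and (ii) that the subderivations produced by applying $\subs$ satisfy the side-conditions of the same rule. Lemma~\ref{lem1} is the key tool for (ii) whenever the rule features a kinding premise $\kenv \Vdash \mtype :: \gkind$, because it lets me replace such a premise by $\kenv_1 \Vdash \subs(\mtype) :: \subs(\gkind)$.

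First I would handle the easy cases. (Const) is immediate since $\subs(\btype) = \btype$. (App), (Abs), (Cond), (Rec) follow directly from the induction hypothesis because the substitution commutes with the relevant type constructors ($\subs(\atype{\mtype_1}{\mtype_2}) = \atype{\subs(\mtype_1)}{\subs(\mtype_2)}$ and similarly for records), noting that in (Abs) I can assume by $\alpha$-conversion that the bound term variable $x$ does not clash with $\dom{\tenv}$. The rules (Sel) and (Modif) combine the induction hypothesis with Lemma~\ref{lem1}: from $\kenv \Vdash \mtype' :: \kind{l : \mtype}$ I obtain $\kenv_1 \Vdash \subs(\mtype') :: \kind{l : \subs(\mtype)}$, which is exactly the kinding premise needed after substitution. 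For (Var) I would unfold Definition~\ref{def:geninst}: given $\tenv(x) = \forall\vtype_1::\gkind_1^1\cdots.\mtype_1$ and a witnessing substitution $S'$ making $\mtype$ a generic instance, I compose $S'$ with $\subs$ (after $\alpha$-renaming the bound variables $\vtype_i$ to be fresh for $\dom{\subs}$ and its range) to exhibit $\subs(\mtype)$ as a generic instance of $\subs(\tenv(x))$ under $\kenv_1$.

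The hard part will be the (Let) and (LetEv) cases, where the closure operation of Definition~\ref{def:closure} generalizes over the essentially free type variables $\vtype_1,\dots,\vtype_n$ of $\mtype'$ (resp.\ $\etype$) that do not appear essentially free in $\tenv_x$. The delicate point is that these $\vtype_i$ are precisely the variables that should \emph{not} be touched by $\subs$ if the closure structure is to be preserved. I would therefore first $\alpha$-rename the generalized variables in the conclusion so that $\{\vtype_1,\dots,\vtype_n\}$ is disjoint from $\dom{\subs}$ and from $\ftv{\subs(\vtype)}$ for every $\vtype \in \dom{\subs}$. Then I extend $\subs$ to $\subs'$ by the identity on $\{\vtype_1,\dots,\vtype_n\}$; since $\kenv' = \kenv \cup \{\vtype_1::\gkind_1,\dots,\vtype_n::\gkind_n\}$, the extended pair $(\kenv_1 \cup \{\vtype_1::\subs(\gkind_1),\dots,\vtype_n::\subs(\gkind_n)\},\, \subs')$ respects $\kenv'$. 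The induction hypothesis applied to the subderivation of $\gterm_1$ then yields a typing under the enlarged kinding environment, and a straightforward check (using freshness of the $\vtype_i$) shows that $\cls{\kenv_1 \cup \{\vtype_1::\subs(\gkind_1),\dots\}}{\subs(\tenv_x)}{\subs(\mtype')} = (\kenv_1, \subs(\ttype))$, so the (Let)/(LetEv) rule can be reapplied after using the induction hypothesis on $\gterm_2$. This bookkeeping around closure and freshness is the only genuine obstacle; every other case is mechanical.
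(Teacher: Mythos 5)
Your proposal follows essentially the same route as the paper's proof: induction on the typing derivation, using Lemma~\ref{lem1} to push the substitution through the kinding premises of (Sel) and (Modif), and handling (Let)/(LetEv) by keeping the generalized variables out of the reach of $\subs$, extending the kinding environment with their substituted kinds, and recomputing the closure. Your (Var) case is in fact spelled out more carefully than the paper's, which elides the composition with the generic-instance witness substitution, but the overall argument is the same.
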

\begin{proof}
 The proof is by induction on the typing derivations.
 \begin{itemize}
 \item (Const) Trivial.
 \item (Var) Then we know that $\tenv$ is well formed under $\kenv$ and $(\kenv_1,\subs)$ respects $\kenv_1$, so $\subs(\tenv)$ is well formed under $\kenv_1$. Now, since $\subs(\tenv)(\vterm) = \subs(\ttype)$, we can conclude that $\kenv_1, \subs(\tenv) \vdash \vterm : \subs(\ttype)$ by (Var). 
 \item (App) Then we know that $\kenv, \tenv \vdash \gterm_1 \gterm_2 : \mtype_2$, for some type $\mtype_2$, and, therefore, that $\kenv, \tenv \vdash \gterm_1 : \atype{\mtype_1}{\mtype_2}$ and $\kenv, \tenv \vdash \gterm_2 : \mtype_1$ must exist, for some type $\mtype_2$. Now, by the induction hypothesis, we know that $\kenv_1, \subs(\tenv) \vdash \gterm_1 : \subs(\atype{\mtype_1}{\mtype_2})$, \textit{i.e.} $\kenv_2, \subs(\tenv) \vdash \gterm_1 : \atype{\subs(\mtype_1)}{\subs(\mtype_2)}$, and $\kenv_2, \subs(\tenv) \vdash \gterm_2 : \subs(\mtype_1)$. Finally, by (App), we have that $\kenv_1, \subs(\tenv) \vdash \gterm_1 \gterm_2 : \subs(\mtype_2)$. 
\item (Abs) Then we know that $\kenv, \tenv \vdash \gterm : \atype{\mtype_1}{\mtype_2}$ for some types $\mtype_1$ and $\mtype_2$, and, therefore, that $\kenv, \tenv \cup \{\vterm : \mtype_1\} \vdash \gterm : \mtype_2$ must exist. Now, by the induction hypothesis, we know that $\kenv_1, \subs(\tenv \cup \{\vterm : \mtype_1\}) \vdash \gterm : \subs(\mtype_2)$, \textit{i.e.} $\kenv_1, \subs(\tenv) \cup \{\vterm : \subs(\mtype_1)\} \vdash \gterm : \subs(\mtype_2)$. Finally, by (Abs), we have that $\kenv_2, \subs(\tenv) \vdash \lambda \vterm.\gterm : \atype{\subs(\mtype_1)}{\subs(\mtype_2)}$, \textit{i.e.} $\kenv_1, \subs(\tenv) \vdash \lambda \vterm.\gterm : \subs(\atype{\mtype_1}{\mtype_2})$. 
    \item (Let) Then we know that $\kenv, \tenv \vdash \letw{\vterm}{\gterm_1}{\gterm_2} : \mtype$, for some type $\mtype$, and, therefore, that $\kenv', \tenv \vdash \gterm_1 : \mtype'$, $\cls{\kenv'}{\tenv}{\mtype'} = (\kenv, \ttype)$, and $\kenv, \tenv_{x} \cup \{x : \ttype\} \vdash \gterm_2 : \mtype$. By the induction hypothesis, we know that $\kenv_1, \subs(\tenv_{x} \cup \{x : \ttype\}) \vdash \gterm_2 : \subs(\mtype)$. By the definition of $\textit{Cls}$, we know we can write $\kenv' = \kenv \cup \{\alpha_1 :: \gkind_1, \alpha_n :: \gkind_n\}$ such that $\{\alpha_1, \dots, \alpha_2\} = \eftv{\kenv'}{\mtype'} \setminus \eftv{\kenv'}{\tenv}$ and $\ttype = \forall\alpha_1::\gkind_1\cdots\forall\alpha_n::\gkind_n.\mtype'$. Now, let $\kenv'' = \kenv_1 \cup \{\alpha_1 :: \subs'(\gkind_1), \dots, \alpha_n :: \subs'(\gkind_n)\}$, where $\subs'$ is the restriction of $\subs$ on $\dom{S}\setminus\{\alpha_1, \dots, \alpha_n\}$. Since $(\kenv_1, \subs)$ respects $\kenv$ and $\kenv' = \kenv \cup \{\alpha_1 :: \subs'(\gkind_1), \dots, \alpha_n :: \subs'(\gkind_n)\}$, then $(\kenv'', \subs')$ respects $\kenv'$. By the induction hypothesis, we have that $\kenv'', \subs'(\tenv) \vdash \gterm_1 : \subs'(\mtype')$. Again, by the definition of $\textit{Cls}$, we obtain $\cls{\kenv''}{\subs'(\tenv)}{\subs'(\mtype')} = (\kenv_1, \forall\alpha_1::\subs'(\gkind_1)\cdots\forall\alpha_n::\subs'(\gkind_n).\subs'(\mtype'))$ and it is easy to see that $\subs(\ttype) = \subs'(\ttype)$ and $\subs(\tenv) = \subs'(\tenv)$. Finally, by (Let), we have that $\kenv_1, \subs(\tenv) \vdash \gterm_2 : \subs(\mtype)$.
\item  (LetEv) Identical to the case for (Let). 
\item (Rec) Then we know that $\kenv, \tenv \vdash \{\glab_1 = \gterm_1, \dots, \glab_n = \gterm_n\} : \{\glab_1 : \mtype_1, \dots, \glab_n : \mtype_n\}$, and, therefore, that $\kenv, \tenv \vdash \gterm_i : \mtype_i, (1 \le i \le n)$, for some types $\mtype_i$. Now, by the induction hypothesis, we know that $\kenv_1, \subs(\tenv) \vdash \gterm_i : \subs(\mtype_i), (1 \le i \le n)$. Finally, by (Rec), we have that $\kenv_1, \subs(\tenv) \vdash \{\glab_1 = \gterm_1, \dots, \glab_n = \gterm_n\} : \{\glab_1 : \subs(\mtype_1), \dots, \glab_n : \subs(\mtype_n)\}$, \textit{i.e.} $\kenv_1, \subs(\tenv) \vdash \{\glab_1 = \gterm_1, \dots, \glab_n = \gterm_n\} : \subs(\{\glab_1 : \mtype_1, \dots, \glab_n : \mtype_n\})$. 
\item (Sel) Then we know that $\kenv, \tenv \vdash \sel{\gterm}{\glab} : \mtype'$, for some type $\mtype'$, and, therefore, that $\kenv, \tenv \vdash \gterm : \mtype$, must exist, for some type $\mtype$, and $\kenv \Vdash \mtype :: \kind{\glab : \mtype'}$. Now, by the induction hypothesis, we know that $\kenv_1, \subs(\tenv) \vdash \gterm : \subs(\mtype)$, and, by lemma \ref{lem1}, we know that, since $\kenv \Vdash \mtype :: \kind{\glab : \mtype'}$ and $(\kenv_1, \subs)$ respects $\kenv$, $\kenv_1 \Vdash \subs(\mtype) :: \subs(\kind{\glab : \mtype'})$, \textit{i.e.} $\kenv_1 \Vdash \subs(\mtype) :: \kind{\glab : \subs(\mtype')}$. Finally, by (Sel), we have that $\kenv_1, \subs(\tenv) \vdash \sel{\gterm}{\glab} : \subs(\mtype')$. 
\item (Modify) Then we know that $\kenv, \tenv \vdash \modw{\gterm_1}{\glab}{\gterm_2} : \mtype$, for some type $\mtype$, and, therefore, that $\kenv, \tenv \vdash \gterm_1 : \mtype$, $\kenv, \tenv \vdash \gterm_2 : \mtype'$, for some type $\mtype'$, and $\kenv \Vdash \mtype :: \kind{\glab : \mtype'}$. Now, by the induction hypothesis, we know that $\kenv_1, \subs(\tenv) \vdash \gterm_1 : \subs(\mtype)$ and $\kenv_1, \subs(\tenv) \vdash \gterm_2 : \subs(\mtype')$, and, by Lemma \ref{lem1}, we know that, since $\kenv \Vdash \mtype :: \kind{\glab : \mtype'}$ and $(\kenv_1, \subs)$ respects $\kenv$, $\kenv_1 \Vdash \subs(\mtype) :: \subs(\kind{\glab : \mtype'})$, \textit{i.e.} $\kenv_1 \Vdash \subs(\mtype) :: \kind{\glab : \subs(\mtype')}$. Finally, by (Modif), we have that $\kenv_1, \subs(\tenv) \vdash \modw{\gterm_1}{\glab}{\gterm_2} : \mtype$.
\item (Cond) Then $\kenv, \tenv \vdash \cond{\gterm_1}{\gterm_2}{\gterm_3} : \mtype$, for some type $\mtype$, and, therefore, $\kenv, \tenv \vdash \gterm : \bltype$, $\kenv, \tenv \vdash \gterm_2 : \mtype$ and $\kenv, \tenv \vdash \gterm_3 : \mtype$ must exist. Now, by the induction hypothesis, we know that $\kenv_1, \subs(\tenv) \vdash \gterm_1 : \subs(\bltype)$, \textit{i.e.} $\kenv_1, \subs(\tenv) \vdash \gterm_1 : \bltype$, $\kenv_1, \subs(\tenv) \vdash \gterm_2 : \subs(\mtype)$ and $\kenv_1, \subs(\tenv) \vdash \gterm_3 : \mtype$. Finally, by (Cond), we have that $\kenv_1, \subs(\tenv) \vdash \cond{\gterm_1}{\gterm_2}{\gterm_3} : \mtype$.
\end{itemize}
\vspace{-0.3in}
\end{proof}

\section{Operational semantics for \EVL}
\label{sec:cbv}
We define a call-by-value operational semantics for \EVL using  evaluation contexts~\cite{FELLEISEN1987205}. This semantics is based on Ohori's operational semantics for his ML-style record calculus and serves as an evaluation model of a polymorphic programming language with records.

\subsection{Operational semantics}
The set of values (ranged over by $\val$) is given by the following grammar:
\begin{align*}
    \val ::= \cterm^\btype \mid \abs{\vterm}{\gterm} \mid \{\glab = \val, \dots, \glab = \val\} \\
\end{align*}
Evaluation contexts guide the evaluation of  terms. The set of evaluation contexts (ranged over by $\eval{\hole}$) is given by the following grammar, where $[\bullet]$ represents the empty context ($\bullet$ is called a hole):

\[
\begin{array}{lcl}
    \eval{\hole} & ::= & [\bullet] \mid \app{\eval{\hole}}{\gterm} \mid  \app{\val}{\eval{\hole}} \mid \cond{\eval{\hole}}{\gterm_1}{\gterm_2} \\ 
    & & \letw{\vterm}{\eval{\hole}}{\gterm} \mid \letEv{\vterm}{\eval{\hole}}{\gterm} \\
    & & \{\glab_1 = \val_1, \dots, \glab_{i-1} = \val_{i-1}, \glab_i = \eval{\hole}, \dots\} \mid \sel{\eval{\hole}}{\glab} \mid \modw{\eval{\hole}}{\glab}{\gterm} \mid \modw{\val}{\glab}{\eval{\hole}} \\ 
\end{array}
\]

Let $\eval{\gterm}$ be the term obtained by placing $\gterm$ in the hole of the context $\eval{\hole}$.
The set of call-by-value context-rewriting axioms are given by the following rules: 
\[
\begin{array}{lcl}
    \eval{(\abs{\vterm}{\gterm}) \ \val}  & \rightarrow & \eval{[\val/\vterm]\gterm} \\
    \eval{\cond{\true}{\gterm_1}{\gterm_2}} & \rightarrow & \eval{\gterm_1} \\
    \eval{\cond{\false}{\gterm_1}{\gterm_2}} & \rightarrow & \eval{\gterm_2} \\
    \eval{\letw{\vterm}{\val}{\gterm}} & \rightarrow & \eval{[\val/\vterm]\gterm} \\
    \eval{\letEv{\vterm}{\val}{\gterm}} & \rightarrow & \eval{[\val/\vterm]\gterm} \\
    \eval{\sel{\{\glab_1 = \val_1, \dots, \glab_n = \val_n\}}{\glab_i}} & \rightarrow & \eval{\val_i} \\
    \eval{\modw{\{\glab_1 = \val_1, \dots, \glab_n = \val_n\}}{\glab_i}{\val}} & \rightarrow & \eval{\{\glab_1 = \val_1, \dots, \glab_i = \val, \dots, \glab_n = \val_n\}}
\end{array}
\]

A one-step evaluation relation $\gterm \xrightarrow[]{\evall} \gterm'$ is then defined as: there exist $\eval{\hole}, \gterm_1, \gterm_2$ such that $\gterm = \eval{\gterm_1}, \eval{\gterm_1} \rightarrow \eval{\gterm_2}$, and $\gterm' = \eval{\gterm_2}$. We write $\gterm \xrightarrow[]{\evall}^{*} \gterm'$ for the reflexive and transitive closure of $\xrightarrow[]{\evall}$ and we write $\gterm \downarrow \gterm'$ if $\gterm \xrightarrow[]{\evall}^{*} \gterm'$ and there is no $\gterm''$ such that $\gterm' \xrightarrow[]{\evall}^{*} \gterm''$.

\subsection{Type soundness}
We now show type soundness with respect to this operational semantics. We start by defining a type-indexed family of predicates on closed values.

For a closed type $\ttype$, let $\vvalue^{\ttype} = \{\val \mid \emptyset, \emptyset \vdash \val : \ttype\}$ and define $\pred^{\ttype} \subseteq \vvalue^{\ttype}$ by induction on $\ttype$ as follows:

\[
\begin{array}{lcl}
    -\ \val \in \pred^{\btype} \ \text{iff} \ \val = \cterm^\btype \ \text{for some constant} \ \cterm^\btype \\
    -\ \val \in \pred^{\forall\vtype_1::\gkind_1 \cdots \forall\vtype_n::\gkind_n.\mtype} \ \text{iff for any ground substitution} \ \subs \ \text{such that} \ \dom{\subs} = \{\vtype_1, \dots, \vtype_n\} \\
    \text{and} \ \subs \ \text{satisfies} \ \{\vtype_1 :: \gkind_1, \dots, \vtype_n :: \gkind_n\}, \val \in \pred^{\subs(\mtype)} \\
    -\ \val \in \pred^{\atype{\mtype_1}{\mtype_2}} \ \text{iff for any} \ \val_0 \in \pred^{\mtype_1}, \ \text{if} \ (\app{\val}{\val_0}) \downarrow \gterm \ \text{then} \ \gterm \in \pred^{\mtype_2} \\
    -\ \val \in \pred^{\{\glab_1 : \mtype_1, \dots, \glab_n : \mtype_n\}} \ \text{iff} \ \val = \{\glab_1 = \val_1, \dots, \glab_n = \val_n\} \ \text{such that} \ \val_i \in \pred^{\mtype_i} (1 \leq i \leq n) \\
\end{array}
\]

\begin{definition}
    Let $\tenv$ be a closed type assignment. A $\tenv$-\textit{environment} is a function $\eta$ such that $\dom{\eta} = \dom{\tenv}$ and for any $\vterm \in \dom{\tenv}$, $\eta(\vterm) \in \vvalue^{\tenv(\vterm)}$ $\eta$ is a function that replaces each (bound) variable $\vterm$ of a closed term, a value with type $\tenv(\vterm)$.
\end{definition}

If $\eta$ is an environment, we write $\eta(\gterm)$ for the term obtained from $\gterm$ by substituting $\eta(\vterm)$ for each free occurrence of $\vterm$ in $\gterm$. For a function $f$, if $\vterm \not\in \dom{f}$, then we write $f\{\vterm \mapsto \val\}$ for the extension $f'$ of $f$ to $\vterm$ such that $f'(x) = \val$.

\begin{example}
    Let $\kenv = \{\vtype_1 :: \ukind, \vtype_2 :: \kind{\glab_1 : \vtype_1}\}$ and $\tenv = \{x_1 : \vtype_1, x_2 : \vtype_2\}$. Then $\kenv, \tenv \vdash \sel{x_2}{\glab_1} : \vtype_1$ and $\subs = [\booltype/\vtype_1, \{\glab_1 : \vtype_1, \glab_2 : \vtype_1\}/\vtype_2]$ respects $\kenv$.
    
    Consider the following $\subs(\tenv)$-environment:
      \[
        \eta : \{x_1, x_2\}  \mapsto  \booltype,\ x_1  \mapsto  \true^\booltype,\ x_2 \mapsto  \{\glab_1 = \true^\booltype, \glab_2 = \false^\booltype\}.\]
    Then $\eta(\sel{x_2}{\glab_1}) = \sel{\{\glab_1 = \true^\booltype, \glab_2 = \false^\booltype\}}{\glab_1}$, $\sel{\{\glab_1 = \true^\booltype, \glab_2 = \false^\booltype\}}{\glab_1} \downarrow \true^\booltype$ and $\true^\booltype \in \pred^{\subs(\alpha_1)} = \pred^\booltype$.
\end{example}

\begin{therm}
    \label{thm:typepreservation}
    If $\kenv, \tenv \vdash \gterm : \ttype$ then for any ground substitution $\subs$ that respects $\kenv$, and for any $\subs(\tenv)$-\textit{environment} $\eta$, if $\eta(\gterm) \downarrow \gterm'$, then $\gterm' \in \pred^{\subs(\ttype)}$.
\end{therm}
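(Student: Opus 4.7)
The plan is to proceed by structural induction on the typing derivation $\kenv, \tenv \vdash \gterm : \ttype$, following the standard logical-relations recipe for polymorphic languages. Along the way I will need a few routine technical facts. First, I need a substitution lemma to identify $\eta(\gterm)$ with a sequence of capture-avoiding substitutions so that extending $\eta$ at a variable $\vterm$ matches the operational rewrite for $\beta$/let-reduction, i.e.\ $\eta\{\vterm \mapsto \val\}(\gterm) = [\val/\vterm]\eta(\gterm)$ when $\vterm \notin \dom{\eta}$. Second, I rely on Lemma~2 (typing stability under kinded substitutions). Third, to match variable lookups with the (Var) rule and generic instantiation, I will strengthen the working hypothesis on $\eta$ so that $\eta(\vterm) \in \pred^{\subs(\tenv(\vterm))}$ (this is what the definition intuitively asks for when the theorem is to go through).

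The structural cases are direct from the definition of $\pred^\ttype$. For (Const), $\eta(\cterm^\btype) = \cterm^\btype$ is already a value in $\pred^\btype$. For (Var), $\eta(\vterm)$ is a value; any generic instance $\mtype$ of $\tenv(\vterm)$ is obtained via a substitution which can be composed with $\subs$ and then the predicate for the polymorphic type unfolds to the required monotype case. For (Rec), (Sel), (Modif), and (Cond), I evaluate the subterms by the induction hypothesis, then use the evaluation rules for records and conditionals together with the structural definition of $\pred^{\{\glab_1:\mtype_1,\dots\}}$ to package or unpack fields; the kinding premise in (Sel)/(Modif) ensures the record actually carries the relevant label.

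The function cases are where the logical relation does real work. For (App), the IH gives $\eta(\gterm_1) \downarrow \val_1 \in \pred^{\atype{\subs(\mtype_1)}{\subs(\mtype_2)}}$ and $\eta(\gterm_2) \downarrow \val_2 \in \pred^{\subs(\mtype_1)}$; since evaluation contexts let us reduce the argument and then the function position, $\eta(\app{\gterm_1}{\gterm_2}) \xrightarrow[]{\evall}^{*} \app{\val_1}{\val_2}$, and the definition of $\pred$ for function types hands us membership in $\pred^{\subs(\mtype_2)}$ for the normal form. For (Abs), I must show $\abs{\vterm}{\eta(\gterm)} \in \pred^{\atype{\subs(\mtype_1)}{\subs(\mtype_2)}}$: pick an arbitrary $\val_0 \in \pred^{\subs(\mtype_1)}$; then $\app{(\abs{\vterm}{\eta(\gterm)})}{\val_0} \xrightarrow[]{\evall} \eta\{\vterm \mapsto \val_0\}(\gterm)$, and the IH applied to the extended environment gives exactly what is required.

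The hardest cases, and the main obstacle, are (Let) and (LetEv), because of generalization. Here $\gterm_1$ is typed under a larger kinding environment $\kenv'$ and the scheme $\ttype = \forall \vtype_1 \! :: \! \gkind_1 \cdots \forall \vtype_n \! :: \! \gkind_n.\mtype'$ arises from $\cls{\kenv'}{\tenv_{\vterm}}{\mtype'}$. To show $\val_1 \in \pred^{\subs(\ttype)}$, I must prove that for every ground substitution $\subs''$ on $\{\vtype_1,\dots,\vtype_n\}$ satisfying the generalized kinds (extended by $\subs$ on the outer variables), $\val_1 \in \pred^{\subs''(\subs(\mtype'))}$. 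The idea is to apply the IH to $\gterm_1$ under the composite substitution $\subs \cup \subs''$, using Lemma~2 (and the construction in the definition of $\cls{\cdot}{\cdot}{\cdot}$) to verify that this composite substitution respects $\kenv'$. Once $\val_1$ is in the polymorphic predicate, I set $\eta' = \eta\{\vterm \mapsto \val_1\}$, observe that $\eta'$ is a legitimate $\subs(\tenv_{\vterm} \cup \{\vterm : \ttype\})$-environment, and apply the IH to $\gterm_2$; the evaluation rule $\eval{\letw{\vterm}{\val_1}{\gterm_2}} \to \eval{[\val_1/\vterm]\gterm_2}$ then lines up with $\eta'(\gterm_2)$. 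The (LetEv) case is identical in structure, since the typing rule differs only in constraining $\mtype'$ to be of event shape $\etype$.
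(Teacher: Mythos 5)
Your proposal is correct and follows essentially the same route as the paper's proof: induction on the typing derivation with the type-indexed predicate $\pred^{\ttype}$ doing the work in the (Abs)/(App) cases, and the (Let)/(LetEv) cases handled by instantiating the generalized variables with an arbitrary ground substitution, composing it with $\subs$, checking that the composite respects the enlarged kinding environment, and then extending $\eta$ with the resulting value. Your explicit strengthening of the environment condition to $\eta(\vterm) \in \pred^{\subs(\tenv(\vterm))}$ is exactly what the paper's (Var) case tacitly relies on (its stated definition only requires membership in $\vvalue^{\tenv(\vterm)}$), so making it explicit is a sound and welcome clarification rather than a departure.
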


\begin{proof}
    Let $\subs$ be any ground substitution respecting $\kenv$, and let $\eta$ be any $\subs(\tenv)$-\textit{environment}. We proceed by induction on the typing derivation. 
    \begin{itemize}
        \item Case (Const): Trivial.
        \item Case (Var): Suppose $\kenv, \tenv \vdash \vterm : \mtype$. Then $\kenv, \tenv \Vdash \tenv(\vterm) \geq \mtype$. Let $\forall\vtype_1::\gkind_1\cdots\forall\vtype_n::\gkind_n.\mtype_0 = \tenv(x)$. Then there is some $\subs_0$ such that $\dom{\subs_0} = \{\vtype_1, \dots, \vtype_n\}$, $\tau = \subs_0(\mtype_0)$, and $\kenv \vdash \subs_0(\vtype_i) :: \subs_0(\gkind_i)$. By Lemma~\ref{lem1}, $\emptyset \Vdash \subs(\subs_0(\vtype_i)) :: \subs(\subs_0(\gkind_i))$. By the bound type variable convention, $\subs(\subs_0(\mtype_0)) = (\subs \circ \subs_0)(\subs(\mtype_0))$ and $\subs(\subs_0(\gkind_i)) = (\subs \circ \subs_0)(\subs(\gkind_i))$, since $\subs_0$ only affects bound variables. This means that $\subs \circ \subs_0$ is a ground substitution respecting $\{\vtype_1 :: \subs(\gkind_1), \dots, \vtype_n :: \subs(\gkind_n)\}$. Now, suppose that $\eta(\vterm) \downarrow \gterm'$. Then by the assumption $\gterm' \in \pred^{\forall\vtype_1::\subs(\gkind_1)\cdots\forall\vtype_n::\subs(\gkind_n).\subs(\mtype_0)}$. By the definition of $\pred$, we have that $\gterm' \in \pred^{(\subs \circ \subs_0)(\subs(\mtype_0))} = \pred^{\subs(\subs_0(\mtype_0))} = \pred^{\subs(\mtype)}$.
        \item Case (App): Suppose $\kenv, \tenv \vdash \app{\gterm_1}{\gterm_2} : \mtype_2$ is derived from $\kenv, \tenv \vdash \gterm_1 : \atype{\mtype_1}{\mtype_2}$ and $\kenv, \tenv \vdash \gterm_2 : \mtype_1$. Now, also suppose that $\eta(\app{\gterm_1}{\gterm_2}) \downarrow \gterm'$. By the definition of evaluation contexts, $\eta(\gterm_1) \downarrow \gterm_1'$ and $(\app{\gterm_1'}{\eta(\gterm_2)}) \downarrow \gterm$, since $(\app{\gterm_1}{\gterm_2})$ fits $(\app{\eval{\hole}}{\gterm})$ and $(\app{\gterm_1'}{\gterm_2})$ fits $(\app{\val}{\eval{\hole}})$. By the induction hypothesis for $\gterm_1$, we have that $\gterm_1' = \val_1 \in \pred^{\atype{\subs(\mtype_1)}{\subs(\mtype_2)}}$ for some value $\val_1$. But, by the definition of evaluation contexts, $\eta(\gterm_2) \downarrow \gterm_2'$ and $(\app{\val_1}{\gterm_2'}) \downarrow \gterm'$ and, by the induction hypothesis for $\gterm_2$, we have that $\gterm_2' = \val_2 \in \pred^{\subs(\mtype_1)}$ for some value $\val_2$. By the definition of $\pred$, we have $\gterm' \in \pred^{\subs(\mtype_2)}$.
        \item Case (Abs): Suppose $\kenv, \tenv \vdash \abs{\vterm}{\gterm_1} : \atype{\mtype_}{\mtype_2}$ is derived from $\kenv, \tenv\cup\{\vterm : \mtype_1\} \vdash \gterm_1 : \mtype_2$. Then, $\eta(\abs{\vterm}{\gterm_1}) = \abs{\vterm}{\eta(\gterm_1)} \downarrow \abs{\vterm}{\eta(\gterm_1)}$. This means that, if we want to see what happens to the type of $\abs{\vterm}{\gterm_1}$ during evaluation, we have to apply it to a value of type $\subs(\mtype_1)$. Let $\val$ be any element in $\pred^{\subs(\mtype_1)}$ and suppose $\app{(\abs{\vterm}{\eta(\gterm_1)})}{\val} \downarrow \gterm'$. By the definition of evaluation contexts, $[\val/\vterm](\eta(\gterm_1)) \downarrow \gterm'$, i.\,e., $\eta\{\vterm \mapsto \val\}(\gterm_1) \downarrow \gterm'$. Since $\eta\{\vterm \mapsto \val\}$ is a $\subs(\tenv\cup\{\vterm : \mtype_1\})$-\textit{environment}, by the induction hypothesis, $\gterm' \in \pred^{\subs(\mtype_2)}$. By the definition of $\pred$, this proves that $\abs{\vterm}{\eta(\gterm_1)} \in \pred^{\atype{\subs(\mtype_1)}{\subs(\mtype_2)}}$.
        \item Case (Cond): Suppose $\kenv, \tenv \vdash \cond{\gterm_1}{\gterm_2}{\gterm_3} : \mtype$ is derived from $\kenv, \tenv \vdash \gterm_1 : \bltype$, $\kenv, \tenv \vdash \gterm_2 : \mtype$, and $\kenv, \tenv \vdash \gterm_2 : \mtype$. Now, also suppose that $\eta(\cond{\gterm_1}{\gterm_2}{\gterm_3}) \downarrow \gterm'$. By the definition of evaluation contexts, $\eta(\gterm_1) \downarrow \gterm_1'$ and $(\cond{\gterm_1'}{\eta(\gterm_2)}{\eta(\gterm_3)}) \downarrow \gterm'$. By the induction hypothesis for $\gterm_1$, $\gterm_1' = \val_1 \in \pred^{\bltype}$, for some value $\val_1 \in \{\true^{\bltype}, \false^{\bltype}\}$. By the definition of evaluation contexts, $\eta(\gterm_2) \downarrow \gterm_2'$ and $\eta(\gterm_3) \downarrow \gterm_3'$. By the induction hypothesis for $\gterm_2$, $\gterm_2' = \val_2 \in \pred^{\subs(\mtype)}$, for some $\val_2$. And, by the induction hypothesis for $\gterm_3$, $\gterm_3' = \val_3 \in \pred^{\subs(\mtype)}$, for some $\val_3$. If $\val_1 = \true^{\bltype}$, then $(\cond{\gterm_1'}{\eta(\gterm_2)}{\eta(\gterm_3)}) \downarrow \val_2$. If $\val_1 = \false^{\bltype}$, then $(\cond{\gterm_1'}{\eta(\gterm_2)}{\eta(\gterm_3)}) \downarrow \val_3$.
        \item Case (Let): Suppose $\kenv, \tenv_\vterm \vdash \letw{\vterm}{\gterm_1}{\gterm_2} : \mtype$ is derived from $\kenv', \tenv_\vterm \vdash \gterm_1 : \mtype'$, $\cls{\kenv'}{\tenv_\vterm}{\mtype'} = (\kenv, \ttype)$, and $\kenv, \tenv_\vterm \cup \{\vterm : \ttype\} \vdash \gterm_2 : \mtype$. Then, there are some $\vtype_1, \dots, \vtype_n$ and $\gkind_1, \dots, \gkind_n$ such that $\kenv = \kenv' \cup \{\vtype_1 :: \gkind_1, \dots, \vtype_n :: \gkind_n\}$ and $\ttype = \forall\vtype_1::\gkind_1\cdots\forall\vtype_n::\gkind_n.\mtype'$. By the bound type variable convention, we can assume that any $\{\vtype_1, \dots, \vtype_n\}$ do not appear in $\subs$. Then $\subs(\ttype) = \forall\vtype_1::\subs(\gkind_1)\cdots\forall\vtype_n::\subs(\gkind_n).\subs(\mtype')$. Let $\subs'$ be any ground substitution such that $\dom{\subs'} = \{\vtype_1, \dots, \vtype_n\}$ and $\subs'$ respects $\{\vtype_1 :: \subs(\gkind_1), \dots, \vtype_n :: \subs(\gkind_n)\}$. Then $\subs' \circ \subs$ is a ground substitution that respects $\kenv'\cup\{\vtype_1 :: \gkind_1, \dots, \vtype_n :: \gkind_n\} = \kenv$, and $\eta$ is a $(\subs' \circ \subs)(\tenv_\vterm)$-\textit{environment}. Therefore, by the induction hypothesis for $\gterm_1$, if $\eta(\gterm_1) \downarrow \gterm_1'$, then $\gterm_1' = \val_1 \in \pred^{\subs'(\subs(\mtype'))}$ and, by the definition of $\pred$, $\val_1 \in \pred^{\subs(\ttype)}$. Now, suppose that $\eta(\letw{\vterm}{\gterm_1}{\gterm_2}) \downarrow \gterm'$. By the definition of evaluation contexts, $\eta(\gterm_1) \downarrow \gterm_1'$ and $(\letw{\vterm}{\gterm_1'}{\eta(\gterm_2)}) \downarrow \gterm'$. This means that $\gterm_1' = \val_1 \in \pred^{\subs(\ttype)}$. By the definition of evaluation contexts, $[\val_1/\vterm](\eta(\gterm_2)) \downarrow \gterm'$, i.\,e., $\eta\{\vterm \mapsto \val_1\}(\gterm_2) \downarrow \gterm'$. Since $\eta\{ \vterm \mapsto \val_1\}$ is a $\subs(\tenv \cup \{ \vterm : \ttype\})$-\textit{environment}, then $\gterm \in \pred^{\subs(\mtype)}$.
        \item Case (LetEv): Similar to the case for (Let).
        \item Case (Rec): Suppose $\kenv, \tenv \vdash \{\glab_1 = \gterm_1, \dots, \glab_n = \gterm_n\} : \{\glab_1 : \mtype_1, \dots, \glab_n : \mtype_n\}$ is derived from $\kenv, \tenv \vdash \gterm_i : \mtype_i, (1 \leq i \leq n)$. Now, also suppose that $\eta(\{\glab_1 = \gterm_1, \dots, \glab_n = \gterm_n\}) \downarrow \gterm$. By the definition of evaluation contexts, $(\{\glab_1 = \eta(\gterm_1), \dots, \glab_n = \eta(\gterm_n)\}) \downarrow \gterm'$. But, by the induction hypothesis, $\eta(\gterm_i) \downarrow \gterm_i'$, $\gterm_i' = \val_i \in \pred^{\subs(\mtype_i)}$, for some value $\val_i$. Thus, by the definition of $\pred$, we have $\gterm' \in \pred^{\{\glab_1 : \subs(\mtype_1), \dots, \glab_n : \subs(\mtype_n)\}}$.
        \item Case (Sel): Suppose $\kenv, \tenv \vdash \sel{\gterm}{\glab} : \mtype$ is derived from $\kenv, \tenv \vdash \gterm : \mtype'$ and $\kenv \vdash \mtype' :: \kind{\glab :: \mtype}$. Now, also suppose that $\eta(\sel{\gterm}{\glab}) \downarrow \gterm'$. By the definition of evaluation contexts, we have that $\eta(\gterm) \downarrow \gterm''$ and $\sel{\gterm''}{\glab} \downarrow \gterm'$. By the induction hypothesis, we have that $\gterm'' = \val \in \pred^{\subs(\tau)}$ for some value $\val$. Since $\subs$ is a ground substitution that respects $\kenv$, by Lemma~\ref{lem1}, we have that $\emptyset \vdash \subs(\mtype') :: \kind{\glab : \subs(\mtype)}$. This implies that $\subs(\mtype')$ is a ground record type of the form $\{\dots, \glab : \subs(\mtype), \dots\}$. Thus, by the definition of $\pred$, $\val = \{\dots, \glab = \val', \dots\}$, $\val' \in \pred^{\subs(\mtype)}$. But $\sel{\{\dots, \glab = \val', \dots\}}{\glab} \downarrow \val'$.
        \item Case (Modif): Suppose $\kenv, \tenv \vdash \modw{\gterm_1}{\glab}{\gterm_2} : \mtype$ is derived from $\kenv, \tenv \vdash \gterm_1 : \mtype$, $\kenv, \tenv \vdash \gterm_2 : \mtype'$, and $\kenv \Vdash \mtype :: \kind{\glab : \mtype'}$. Now, also suppose $\eta(\modw{\gterm_1}{\glab}{\gterm_2}) \downarrow \gterm'$. By the definition of evaluation contexts, $\eta(\gterm_1) \downarrow \gterm_1'$ and $\modw{\gterm_1'}{\glab}{\eta(\gterm_2})) \downarrow \gterm'$. By the induction hypothesis for $\gterm_1$, $\gterm_1' = \val_1 \in \pred^{\subs(\mtype)}$, for some value $\val_1$. Since $\subs$ is a ground substitution that respects $\kenv$, by Lemma~\ref{lem1}, $\emptyset \Vdash \subs(\mtype) :: \kind{\glab : \subs(\mtype')}$. This implies that $\subs(\mtype)$ is a ground record type of the form $\{\dots, \glab : \subs(\mtype'), \dots\}$. By the definition of $\pred$, $\val_1 = \{\dots, \glab = \val, \dots\}, \val \in \pred^{\subs(\mtype')}$. By the definition of evaluation contexts, $\eta(\gterm_2) \downarrow \gterm_2'$ and $\modw{\gterm_1'}{\glab}{\gterm_2'} \downarrow \gterm'$. By the induction hypothesis for $\gterm_2$, $\gterm_2' = \val_2 \in \pred^{\subs(\mtype')}$, for some value $\val_2$. But $\modw{\{\dots, \glab = \val, \dots\}}{\glab}{\val_2} \downarrow \{\dots, \glab : \val_2, \dots\}$ and $\{\dots, \glab = \val_2, \dots\} \in \pred^{\{\dots, \glab : \subs(\mtype'), \dots\}} = \pred^{\subs(\mtype)}$.
    \end{itemize}
    \vspace{-0.3in}
\end{proof}
From this theorem, we have the following corollary, which states that a well-typed program of type $\ttype$ evaluates to a value of type $\ttype$. In particular, this means that a well-typed program will not produce run time type errors.

\begin{corollary}
    If $\emptyset, \emptyset \vdash \gterm : \ttype$ and $\gterm \downarrow \gterm'$ then $\gterm'$ is a value of type $\ttype$.
\end{corollary}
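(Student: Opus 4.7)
The plan is to derive this directly from Theorem~\ref{thm:typepreservation} by instantiating all the parameters with their trivial/empty counterparts. First, I would set $\subs$ to be the empty ground substitution. Since $\kenv = \emptyset$, the condition that $\subs$ respects $\kenv$ holds vacuously. Next, the $\subs(\tenv)$-environment $\eta$ is necessarily the empty function, because $\dom{\eta} = \dom{\subs(\tenv)} = \dom{\emptyset} = \emptyset$. Consequently $\eta(\gterm) = \gterm$ (there are no free variables to substitute for, since $\tenv$ is empty and $\gterm$ is well-typed under $\emptyset$), and so $\eta(\gterm) \downarrow \gterm'$ follows immediately from the hypothesis $\gterm \downarrow \gterm'$.

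Applying Theorem~\ref{thm:typepreservation} now yields $\gterm' \in \pred^{\subs(\ttype)} = \pred^{\ttype}$. It then remains to unpack what membership in $\pred^{\ttype}$ tells us. By construction, the type-indexed family $\pred^{\ttype}$ is defined as a subset of $\vvalue^{\ttype} = \{\val \mid \emptyset, \emptyset \vdash \val : \ttype\}$, so every element of $\pred^{\ttype}$ is in particular a value $\val$ such that $\emptyset, \emptyset \vdash \val : \ttype$. Hence $\gterm'$ is a value of type $\ttype$, as required.

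The only mild subtlety is that the definition of $\pred^{\ttype}$ in the polytype case $\forall \vtype_1::\gkind_1 \cdots \forall \vtype_n::\gkind_n.\mtype$ quantifies over all ground substitutions on $\{\vtype_1,\dots,\vtype_n\}$, so if $\ttype$ is a polytype one should observe that $\pred^{\ttype}$ is still contained in $\vvalue^{\ttype}$ by the surrounding convention; no genuine work is needed beyond noticing this. I do not expect any real obstacle here: the corollary is essentially a repackaging of Theorem~\ref{thm:typepreservation} in the closed, top-level case, and the whole point of the predicate $\pred^{\ttype}$ having been defined as a subset of $\vvalue^{\ttype}$ is precisely to make this final extraction immediate.
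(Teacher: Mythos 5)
Your proposal is correct and is exactly the argument the paper intends: the corollary is stated as an immediate consequence of Theorem~\ref{thm:typepreservation}, obtained by taking $\subs$ and $\eta$ to be empty so that $\eta(\gterm)=\gterm$, and then using the fact that $\pred^{\ttype}$ is by definition a subset of $\vvalue^{\ttype}=\{\val \mid \emptyset,\emptyset\vdash\val:\ttype\}$. Nothing is missing.
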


\section{A type inference algorithm for \EVL}
\label{sec:ti}
We now adapt Ohori's $\infer{\kenv}{\tenv}{\gterm}$ inference algorithm to our language. It uses a refinement of Robinson's unification algorithm~\cite{Robinson65} that considers kind constraints on type variables. We start by discussing \emph{kinded unification} for \EVL types.

A \emph{kinded set of equations} is a pair $\kuni{\kenv}{\geqs}$, where $\kenv$ is a kinding environment and $\geqs$ is a set of pairs of types $ (\mtype_1, \mtype_2)$ that are well-formed under $\kenv$.
A \emph{kinded substitution} $\ksubs{\kenv}{\subs}$ is a unifier of a kinded set of equations $\kuni{\kenv}{\geqs}$, if every type that appears in $\geqs$ respects $\kenv$, and $\forall (\mtype_1, \mtype_2) \in \geqs, \subs(\mtype_1) = \subs(\mtype_2)$ ($\subs$ satisfies $\geqs$). A kinded substitution $\ksubs{\kenv_1}{\subs_1}$ is the \emph{most general unifier} of $\kuni{\kenv}{\geqs}$ if it is a unifier of $\kuni{\kenv}{\geqs}$ and if for any other unifier $\ksubs{\kenv_2}{\subs_2}$ of $\kuni{\kenv}{\geqs}$ there is some substitution $\subs_3$ such that $\ksubs{\kenv_2}{\subs_3}$ respects $\kenv_1$ and $\subs_2 = \subs_3 \circ \subs_1$.

The \emph{kinded unification algorithm}, $\unify{\geqs}{\kenv}$, is defined by the transformation rules in Figure~\ref{fig:kindunif}. Each rule is of the form $(\geqs_1, \kenv_1, \subs_1) \Ra (\geqs_2, \kenv_2, \subs_2)$, where $\geqs_1,\geqs_2$ are sets of pairs of types, $\kenv_1, \kenv_2$ are kinding environments, and $\subs_1,\subs_2$ are substitutions. After a transformation step, $\geqs_2$ keeps the set of pairs of types to be unified, $\kenv_2$ specifies kind constraints to be verified, and $\subs_2$ is the substitution resulting from unifying the pairs of types that have been removed from $\geqs$. Given a  kinded set of equations $(\kenv_1, \geqs_1)$ the algorithm $\unify{\geqs_1}{\kenv_1}$ proceeds by applying the transformation rules to $(\geqs_1, \kenv_1, \emptyset)$, until no more rules can be applied, resulting in a triple $(\geqs, \kenv, \subs)$. If $\geqs = \emptyset$ then it returns the pair $(\kenv, \subs)$, otherwise it reports failure.
\begin{figure}
\[{\small
  \begin{array}{rrcl}
   & (\geqs \cup \{(\mtype, \mtype)\}, \kenv, \subs)& \Ra& (\geqs, \kenv, \subs) \\ \\
   & (\geqs \cup \{(\vtype, \mtype)\}, \kenv \cup \{(\vtype, \ukind)\}, \subs) &\Ra& ([\mtype/\vtype] \geqs, [\mtype/\vtype] \kenv, [\mtype/\vtype] \subs \cup \{(\vtype, \mtype)\}) \\ \\
   & (\geqs \cup \{(\mtype, \vtype)\}, \kenv \cup \{(\vtype, \ukind)\}, \subs) & \Ra& ([\mtype/\vtype]\geqs, [\mtype/\vtype]\kenv, [\mtype/\vtype]\subs \cup \{(\vtype, \mtype)\}) \\ \\
   & (\geqs \cup \{(\vtype_1, \vtype_2)\}, \kenv \cup \{(\vtype_1, \kind{F_1}), (\vtype_2, \kind{F_2})\}, \subs) & \Ra & ([\vtype_2/\vtype_1](\geqs \cup \{(F_1(l), F_2(l)) \mid l \in \dom{F_1} \cap \dom{F_2}\}), \\
    &&&[\vtype_2/\vtype_1](\kenv) \cup \{(\vtype_2, [\vtype_2/\vtype_1](\kind{F_1 \pm F_2}))\}, \\
& & & \ [\vtype_2/\vtype_1](\subs)\cup \{(\vtype_1, \vtype_2)\}) \\ \\
   &  (\geqs \cup \{(\vtype, \{F_2\})\}, \kenv \cup \{(\vtype, \kind{F_1})\}, \subs) &\Ra& ([\{F_2\}/\vtype](\geqs \cup \{(F_1(l), F_2(l)) \mid l \in \dom{F_1}\}), \\
    &&& [\{F_2\}/\vtype](\kenv), \\
    &&& [\{F_2\}/\vtype](\subs) \cup \{(\vtype, \{F_2)\})\}) \\
    &&&\text{if} \ \dom{F_1} \subseteq \dom{F_2} \ \text{and} \ \vtype \not \in \text{FTV}(\{F_2\}) \\ \\
    & (\geqs \cup \{(\{F_2\}, \vtype)\}, \kenv \cup \{(\vtype, \kind{F_1})\}, \subs) & \Ra & ([\{F_2\}/\vtype](\geqs \cup \{(F_1(l), F_2(l)) \mid l \in \dom{F_1}\}), \\
    &&&[\{F_2\}/\vtype](\kenv), \\
    &&&[\{F_2\}/\vtype](\subs) \cup \{(\vtype, \{F_2)\})\}) \\
    &&&\text{if} \ \dom{F_1} \subseteq \dom{F_2} \ \text{and} \ \vtype \not \in \text{FTV}(\{F_2\}) \\ \\
 &(\geqs \cup \{(\{F_1\}, \{F_2\})\}, \kenv, \subs)& \Ra& (\geqs \cup \{(F_1(l), F_2(l)) \mid l \in \dom{F_1}\}, \kenv, \subs) \\
    &&&\text{if} \ \dom{F_1} = \dom{F_2} \\ \\
& (\geqs \cup \{(\atype{\mtype^1_1}{\mtype^2_1}, \atype{\mtype^1_2}{\mtype^2_2}\}, \kenv, \subs) &\Ra& (\geqs \cup \{(\mtype^1_1, \mtype^1_2), (\mtype^2_1, \mtype^2_2)\}, \kenv, \subs)
  \end{array}}
  \]
  \caption{Kinded Unification}
  \label{fig:kindunif}
\end{figure}

\begin{example} Let $\alpha_1$ and $\alpha_2$ be two type variables and $\kenv = \{\alpha_1 :: \kind{location : \alpha_3}$, $\alpha_2 :: \kind{fire\_danger : \strtype, location : \strtype}, \alpha_3 :: \ukind\}$.
{\small
\begin{align*}
    &(\{(\alpha_1, \alpha_2)\}, \{(\alpha_1, \kind{location : \alpha_3}), (\alpha_2, \kind{fire\_danger : \strtype, location : \strtype}), (\alpha_3, \ukind)\}, \{\}) \\
    &\Ra (\{(\alpha_3, \strtype)\}, \{(\alpha_2, \kind{fire\_danger : \strtype, location : \alpha_3}), (\alpha_3, \ukind)\}, \{(\alpha_1, \alpha_2)\}) \\
    &\Ra (\{\}, \{(\alpha_2, \kind{fire\_danger : \strtype, location : \strtype})\}, \{(\alpha_1, \alpha_2), (\alpha_3, \strtype)\})
\end{align*}}
The most general unifier between $\alpha_1$ and $\alpha_2$ is the kinded substitution $(\{\alpha_2 :: \kind{fire\_danger : \strtype, location : \strtype}\}, [\alpha_2/\alpha_1, \strtype/\alpha_3])$.
\end{example}
Following Ohori's notation, in the kinded  unification algorithm we use pairs in the representation of substitutions and kind assignments. 
Also, note that the unification algorithm in~\cite{Ohori95} has a kind assignment as an extra parameter, which is used to record the solved kind constraints encountered through the unification process. However, we choose to omit this parameter because its information is only used in the proofs in~\cite{Ohori95} but not in the unification process itself.

In~\cite{Ohori95}, the correctness and completeness of the kinded unification algorithm was proved, in the sense that it takes any kinded set of equations and computes its most general unifier if one exists and reports failure otherwise.

The \emph{type inference algorithm}, $\infer{\kenv}{\tenv}{\gterm}$, is defined in Figure~\ref{fig:typeinf}. Given a  kinding environment $\kenv$, a typing environment $\tenv$, and an \EVL term $\gterm$, then $\infer{\kenv_1}{\tenv}{\gterm}= (\kenv', \subs, \ttype)$, such that $\ttype$ is the type of $\gterm$ under the kinding environment $\kenv'$ and typing environment $\subs(\tenv)$. It is implicitly assumed that the inference algorithm fails if unification or any of the recursive calls on subterms fails.
 \begin{figure}[htbp]
   \[
   {\small
   \begin{array}{rcl}
     \infer{\kenv}{\tenv}{\cterm^{\btype}} &=& (\kenv,\id,\btype)\\
     \infer{\kenv}{\tenv}{\vterm} &=& \text{if} \ \vterm \not \in \dom{\tenv} \ \text{then} \ \textit{fail} \\
    &&\text{else let} \ \forall \vtype_1::\gkind_1 \cdots \forall \vtype_n::\gkind_n.\mtype = \tenv(\vterm), \\
    && \qquad \ \ \ \ \subs = [\beta_1/\vtype_1, \dots, \beta_n/\vtype_n] \ (\beta_1, \dots, \beta_n \ \text{are fresh}) \\
    && \qquad \text{in} \ (\kenv \cup \{\beta_1::\subs(\gkind_1), \dots, \beta_n::\subs(\gkind_n)\}, \id, \subs(\mtype)) \\
   \infer{\kenv}{\tenv}{\app{\gterm_1}{\gterm_2}} &=&  \text{let} \ (\kenv_1, \subs_1, \mtype_1) = \infer{\kenv}{\tenv}{\gterm_1} \\
    &&\ \ \ \ (\kenv_2, \subs_2, \mtype_2) = \infer{\kenv_1}{\subs_1(\tenv)}{\gterm_2} \\
    &&\ \ \ \ (\kenv_3, \subs_3) = \unify{\kenv_2\{\alpha :: \ukind\}}{\{(\subs_2(\mtype_1), \atype{\mtype_2}{\vtype})\}} \ (\vtype \ \text{is fresh})\\
    &&\text{in} \ (\kenv_3, \subs_3 \circ \subs_2 \circ \subs_1, \subs_3(\vtype)) \\ 
     \infer{\kenv}{\tenv}{\abs{\vterm}{\gterm}} &=& \text{let} \ (\kenv_1, \subs_1, \mtype) = \infer{\kenv \cup \{\vtype::\ukind\}}{\tenv \cup \{\vterm : \vtype\}}{\gterm} \ (\vtype \ \text{fresh}) \\
    && \text{in} \ (\kenv_1, \subs_1, \atype{\subs_1(\vtype)}{\mtype}) \\ 
       \infer{\kenv}{\tenv}{\letw{x}{\gterm_1}{\gterm_2}} &=& \text{let} \ (\kenv_1, \subs_1, \mtype_1) = \infer{\kenv}{\tenv}{\gterm_1} \\
    &&\ \ \ \ (\kenv'_1, \sigma) = \cls{\kenv_1}{\subs_1(\tenv)}{\mtype_1} \\
    &&\ \ \ \  (\kenv_2, \subs_2, \mtype_2) = \infer{\kenv'_1}{\subs_1(\tenv) \cup \{\vterm : \ttype\}}{\gterm_2} \\
    &&\text{in} \ (\kenv_2, \subs_2 \circ \subs_1, \mtype_2) \\ 
    \infer{\kenv}{\tenv}{\letEv{x}{\gterm_1}{\gterm_2}} &=&\text{let} \ (\kenv_1, \subs_1, \etype) = \infer{\kenv}{\tenv}{\gterm_1} \\
    &&\ \ \ \ (\kenv'_1, \ttype) = \cls{\kenv_1}{\subs_1(\tenv)}{\etype} \\
    &&\ \ \ \ (\kenv_2, \subs_2, \mtype_2) = \infer{\kenv'_1}{\subs_1(\tenv) \cup \{\vterm : \ttype\}}{\gterm_2} \\
    &&\text{in} \ (\kenv_2, \subs_2 \circ \subs_1, \mtype_2) \\
   \infer{\kenv}{\tenv}{\{l_1 = \gterm_1, \dots, l_n = \gterm_n\}} &=& \text{let} \ (\kenv_1, \subs_1, \mtype_1) = \infer{\kenv}{\tenv}{\gterm_1} \\
    && \ \ \ \  (\kenv_i, \subs_i, \mtype_i) = \infer{\kenv_{i-1}}{\subs_{i-1} \circ \cdots \circ \subs_1(\tenv)}{\gterm_i} \ (2 \le i \le n) \\
    &&\text{in} \ (\kenv_n, \subs_n \circ \cdots \circ \subs_2 \circ \subs_1, \\
    &&\ \ \ \ \{l_1 : \subs_n \circ \cdots \circ \subs_2(\mtype_1), \dots, l_i : \subs_n \circ \cdots \circ \subs_{i+1}(\mtype_i), \dots, l_n : \mtype_n\}) \\
     \infer{\kenv}{\tenv}{\sel{\gterm}{l}} &=& \text{let} \ (\kenv_1, \subs_1, \mtype_1) = \infer{\kenv}{\tenv}{\gterm} \\
    &&\ \ \ \ (\kenv_2, \subs_2) = \unify{\kenv_1 \cup \{\vtype_1 :: \ukind, \vtype_2 :: \kind{l : \vtype_1}\}}{\{(\vtype_2, \mtype_1)\}} \ (\vtype_1, \vtype_2 \ \text{fresh}) \\
    &&\text{in} \ (\kenv_2, \subs_2 \circ \subs_1, \subs_2(\vtype_1))\\
     \infer{\kenv}{\tenv}{\modw{\gterm_1}{l}{\gterm_2}} &=& \text{let} \ (\kenv_1, \subs_1, \mtype_1) = \infer{\kenv}{\tenv}{\gterm_1} \\
    &&\ \ \ \ (\kenv_2, \subs_2, \mtype_2) = \infer{\kenv_1}{\subs_1(\tenv)}{\gterm_2} \\
     &&\ \ \ \ (\kenv_3, \subs_3) = \unify{\kenv_2 \cup \{\vtype_1 :: \ukind, \vtype_2 :: \kind{l : \vtype_1}\}}{\{(\vtype_1, \mtype_2), (\vtype_2, \subs_2(\mtype_1))\}} \\
     &&\ \ \ \ (\vtype_1, \vtype_2 \ \text{are fresh}) \\
    &&\text{in} \ (\kenv_3, \subs_3 \circ \subs_2 \circ \subs_1, \subs_3(\vtype_2)) \\
    \infer{\kenv}{\tenv}{\cond{\gterm_1}{\gterm_2}{\gterm_3}} &=& \text{let} \ (\kenv_1, \subs_1, \mtype_1) = \infer{\kenv}{\tenv}{\gterm_1} \\
    &&\ \ \ \ (\kenv_2, \subs_2) = \unify{\kenv_1}{\{(\mtype_1, \bltype)\}} \\
    &&\ \ \ \ (\kenv_3, \subs_3, \mtype_2) = \infer{\kenv_2}{\subs_2 \circ \subs_1 (\tenv)}{\gterm_2} \\
    &&\ \ \ \ (\kenv_4, \subs_4, \mtype_3) = \infer{\kenv_3}{\subs_3 \circ \subs_2 \circ \subs_1 (\tenv)}{\gterm_3} \\
    &&\ \ \ \  (\kenv_5, \subs_5) = \unify{\kenv_4}{\{(\subs_4(\mtype_2), \mtype_3)\}} \\
    &&\text{in} \ (\kenv_5, \subs_5 \circ \subs_4 \circ \subs_3 \circ \subs_2 \circ \subs_1, \subs_5 \circ \subs_4(\mtype_2))
   \end{array}}
   \]
   \ \\
   \caption{Type inference algorithm}
   \label{fig:typeinf}
\vspace{-0.2in}
\end{figure}
 
\begin{example}
Following Example~\ref{ex:typesystem}, we consider $\shterm = \{location = l, fire\_danger = d\}$, $\shtype_1 = \{location : \alpha_1, fire\_danger : \alpha_2\}$, $\shtype_2 = \forall\alpha_1 :: \ukind.\forall\alpha_2 :: \ukind.\atype{\alpha_1}{\atype{\alpha_2}{\shtype_1}}$, $\shtype_3 = \{location : \strtype, fire\_danger : \strtype\}$, and $\shtype_4 = \atype{\strtype}{\atype{\strtype}{\shtype_3}}$, we further consider $\shtype_5 = \{location : \alpha_3, fire\_danger : \alpha_4\}$, $\shtype_6 = \forall\alpha_3::\ukind.\forall\alpha_4::\ukind.\atype{\alpha_3}{\atype{\alpha_4}{\shtype_5}}$, $\shtype_7 = \{location : \alpha_5, fire\_danger : \alpha_6\}$, $\shsubs_1 =  [\strtype/\alpha_5, \strtype/\alpha_3, \alpha_6/\alpha_4]$, and $\shsubs_2 = [\strtype/\alpha_6, \strtype/\alpha_5, \strtype/\alpha_3, \strtype/\alpha_4, \strtype/\alpha_2, \strtype/\alpha_1]$. A run of the algorithm for $\letEv{FireDanger}{\abs{l}{\abs{d}{\shterm}}}{FireDanger \ \porto^\strtype \ \low^\strtype}$
is given in Figure~\ref{fig:typerun}.

\begin{figure}[htbp]
{\small
\begin{align*}
    &\infer{\{\}}{\{\}}{\letEv{FireDanger}{\abs{l}{\abs{d}{\shterm}}}{FireDanger \ \porto^\strtype \ \low^\strtype}} = (\{\}, \shsubs_2, \shtype_3) \\
    \\
    &\quad \infer{\{\}}{\{\}}{\abs{l}{\abs{d}{\shterm}}} = (\{\alpha_3 :: \ukind, \alpha_4 :: \ukind \}, [\alpha_4/\alpha_2, \alpha_3/\alpha_1], \atype{\alpha_3}{\atype{\alpha_4}{\shtype_5}}) \\
    &\qquad \infer{\{\alpha_1 :: \ukind\}}{\{l : \alpha_1\}}{\abs{d}{\shterm}} = (\{\alpha_3 :: \ukind, \alpha_4 :: \ukind \}, [\alpha_4/\alpha_2, \alpha_3/\alpha_1], \atype{\alpha_4}{\shtype_5}) \\
    &\qquad \quad \infer{\{\alpha_1 :: \ukind, \alpha_2 :: \ukind\}}{\{l : \alpha_1, d : \alpha_2\}}{\shterm} = (\{\alpha_3 :: \ukind, \alpha_4 :: \ukind \}, [\alpha_4/\alpha_2, \alpha_3/\alpha_1], \shtype_5) \\
    &\qquad \qquad \infer{\{\alpha_1 :: \ukind, \alpha_2 :: \ukind\}}{\{l : \alpha_1, d : \alpha_2\}}{l} = (\{\alpha_2 :: \ukind, \alpha_3 :: \ukind\}, [\alpha_3/\alpha_1], \alpha_3) \\
    &\qquad \qquad \infer{\{\alpha_2 :: \ukind, \alpha_3 :: \ukind\}}{\{l : \alpha_3, d : \alpha_2\}}{d} = (\{\alpha_3 :: \ukind, \alpha_4 :: \ukind\}, [\alpha_4/\alpha_2], \alpha_4) \\
    \\
    &\quad \cls{\{\alpha_3 :: \ukind, \alpha_4 :: \ukind\}}{\{\}}{\atype{\alpha_3}{\atype{\alpha_4}{\shtype_5}}} = (\{\}, \shtype_6) \\\\
    &\quad \infer{\{\}}{\{FireDanger : \shtype_6\}}{FireDanger \ \porto^\strtype \ \low^\strtype} = (\{\}, [\strtype/\alpha_6] \circ \shsubs_1, \shtype_3) \\
    &\qquad \infer{\{\}}{\{FireDanger : \shtype_6\}}{FireDanger \ \porto^\strtype} = (\{\}, \shsubs_1, \atype{\alpha_6}{\{location : \strtype, fire\_danger : \alpha_6\}}) \\
    &\qquad\quad \infer{\{\}}{\{FireDanger : \shtype_6\}}{FireDanger} = (\{\}, [\alpha_5/\alpha_3, \alpha_6/\alpha_4], \atype{\alpha_5}\atype{\alpha_6}{\shtype_7}) \\
    &\qquad\quad\infer{\{\}}{\{FireDanger : \shtype_6\}}{\porto^\strtype} = (\{\}, id, \strtype) \\
    &\qquad\quad \unify{\{\}}{(\alpha_5, \strtype)} = (\{\}, [\strtype/\alpha_5]) \\
    &\qquad \infer{\{\}}{\{FireDanger : \shtype_6\}}{\low^\strtype} = (\{\}, id, \strtype) \\
    &\qquad \unify{\{\}}{(\alpha_6, \strtype)} = (\{\}, [\strtype/\alpha_6])
\end{align*}}
\ \\
\caption{Type inference run for $\letEv{FireDanger}{\abs{l}{\abs{d}{M}}}{FireDanger \ \porto^\strtype \ \low^\strtype}$}
\label{fig:typerun}
\vspace{-0.2in}
\end{figure}
\end{example}
\subsection{Soundness and completeness of \textit{WK}}
In this section we prove soundness and completeness of our type inference algorithm. 

\begin{therm}
  \label{thm:sound}
  If $\infer{\kenv}{\tenv}{\gterm} = (\kenv',\subs,\mtype)$ then $(\kenv',\subs)$ respects $\kenv$ and there is a derivation in our type system such that $\kenv', \subs(\tenv) \vdash \gterm : \mtype$.
\end{therm}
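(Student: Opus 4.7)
The plan is to proceed by structural induction on the \EVL term $\gterm$, following exactly the case analysis of $\infer{\kenv}{\tenv}{\gterm}$ in Figure~\ref{fig:typeinf}. Two facts are needed throughout: (i) Lemma~\ref{lem2}, which propagates a typing derivation through any substitution that respects the current kinding environment, and (ii) correctness of kinded unification (as recalled from~\cite{Ohori95}), which guarantees that whenever $(\kenv_{i+1},\subs_{i+1}) = \unify{\kenv_i}{E}$ succeeds, then $(\kenv_{i+1},\subs_{i+1})$ respects $\kenv_i$ and $\subs_{i+1}$ is a unifier of $E$. I will also use the obvious fact that composition of respecting substitutions respects the original environment.

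The base cases $\cterm^\btype$ and $\vterm$ are immediate: for the variable case the algorithm returns the identity substitution composed with a fresh-variable instantiation $\subs = [\beta_1/\alpha_1,\dots,\beta_n/\alpha_n]$ of $\tenv(\vterm) = \forall\alpha_1::\gkind_1\cdots\forall\alpha_n::\gkind_n.\mtype$, and by construction $\kenv \cup \{\beta_i::\subs(\gkind_i)\}$ respects $\kenv$, so (Var) together with Definition~\ref{def:geninst} yields the derivation of $\mtype$. For $\abs{\vterm}{\gterm}$, the IH on $\gterm$ combined directly with rule (Abs) suffices. For the inductive cases I will unwind the algorithm step by step: the IH applied to the first recursive call gives both a typing derivation and the corresponding respect property; I then use Lemma~\ref{lem2} to transport this derivation across the substitutions produced by subsequent recursive calls and unifications, so that at the final step all derivations live in the same output kinding environment $\kenv'$ and can be combined by the appropriate typing rule (App, Rec, Sel, Modif, Cond).

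The two delicate cases are $\letw{\vterm}{\gterm_1}{\gterm_2}$ and $\letEv{\vterm}{\gterm_1}{\gterm_2}$. Here I invoke the IH on $\gterm_1$ to obtain $\kenv_1',\subs_1(\tenv) \vdash \gterm_1 : \mtype_1$, then use Definition~\ref{def:closure} to split $\kenv_1' = \kenv_1 \cup \{\alpha_1::\gkind_1,\dots,\alpha_n::\gkind_n\}$ with $\ttype = \forall\alpha_1::\gkind_1\cdots\forall\alpha_n::\gkind_n.\mtype_1$; the $\alpha_i$ are essentially free in $\mtype_1$ but not in $\subs_1(\tenv)$, which is exactly the side-condition needed for rule (Let)/(LetEv) to generalize them. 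The IH applied to $\gterm_2$ under $\kenv_1$ and $\subs_1(\tenv) \cup \{\vterm:\ttype\}$ returns a derivation that, together with the (Let) or (LetEv) rule, yields the required derivation in the output environment. For (LetEv) one additionally observes that the algorithm's call on $\gterm_1$ produces a type of shape $\etype$ because the recursive structure of $\gterm_1$ (ultimately a record) forces the inferred type into $\etype$'s grammar; this is a syntactic check, not an extra reasoning step.

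The main obstacle is the bookkeeping in the multi-step cases, particularly App, Modif, and Cond, where several substitutions are composed with unifiers and each intermediate derivation must be transported to the final kinding environment via Lemma~\ref{lem2}. The non-trivial observation is that the freshness side-conditions on $\vtype$, $\vtype_1$, $\vtype_2$ introduced before each unify call ensure $\kenv_i \cup \{\vtype::\ukind\}$ (resp.\ with the kind $\kind{l:\vtype_1}$) is well-formed, so that the unifier $\subs_{i+1}$ respects this extended environment; then $\subs_{i+1} \circ \subs_i \circ \cdots \circ \subs_1$ respects the original $\kenv$, giving the first conjunct of the theorem. The second conjunct follows by combining the transported sub-derivations with the matching typing rule, using Lemma~\ref{lem1} where a kinding condition must also be propagated (notably in Sel and Modif, to re-establish $\kenv' \Vdash \subs(\mtype) :: \kind{l:\subs(\mtype')}$ from the fact that unification produced $\vtype_2::\kind{l:\vtype_1}$ and identified $\vtype_2$ with the inferred record type).
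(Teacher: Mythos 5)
Your proposal is correct and follows essentially the same route as the paper: structural induction on the term mirroring the cases of $\infer{\kenv}{\tenv}{\gterm}$, with Lemma~\ref{lem2} transporting sub-derivations across later substitutions, the correctness of kinded unification supplying the ``respects'' conjunct, and the closure construction handling (Let)/(LetEv) --- the paper itself only writes out the (Cond) case and defers the rest to Ohori. The one small imprecision is your justification in the (LetEv) case: the inferred type has shape $\etype$ not because $\gterm_1$ is ``ultimately a record'' but because the algorithm's pattern match on $\etype$ implicitly fails otherwise, and the theorem's hypothesis assumes success; your conclusion there is nonetheless the right one.
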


\begin{proof}
  The proof is by induction on the structure of $\gterm$. We only show the case for $\cond{\gterm_1}{\gterm_2}{\gterm_3}$. The case for $\cterm^\btype$ is trivial and the remaining cases are similar to the corresponding proof in~\cite{Ohori95}.
  \begin{itemize}
  \item $\gterm \equiv \cond{\gterm_1}{\gterm_2}{\gterm_3}$. Suppose that $\infer{\kenv}{\tenv}{\cond{\gterm_1}{\gterm_2}{\gterm_3}} = (\kenv', \subs, \mtype)$. Then $\infer{\kenv}{\tenv}{\cond{\gterm_1}{\gterm_2}{\gterm_3}} = (\kenv', \subs, \mtype)$, $\unify{\kenv_1}{\{(\mtype_1, \bltype)\}} = (\kenv_2, \subs_2)$, $\infer{\kenv_2}{\subs_2 \circ \subs_1(\tenv)}{\gterm_2} = (\kenv_3, \subs_3, \mtype_2)$, $\infer{\kenv_3}{\subs_3 \circ \subs_2 \circ \subs_1(\tenv)}{\gterm_3} = (\kenv_4, \subs_4, \mtype_3)$, $\unify{\kenv_4}{\{(\subs_4(\mtype_2), \mtype_3)\}} = (\kenv_5, \subs_5)$, $\kenv' = \kenv_5$, $\subs = \subs_5 \circ \subs_4 \circ \subs_3 \circ \subs_2 \circ \subs_1$ and $\mtype = \subs_5(\mtype_3)$. First, we show that $(\kenv_5, \subs_5 \circ \subs_4 \circ \subs_3 \circ \subs_2 \circ \subs_1)$ respects $\kenv$. By the correction of the unification algorithm, we know that $(\kenv_5, \subs_5)$ respects $\kenv_4$ and $(\kenv_2, \subs_2)$ respects $\kenv_1$. We also know that $\subs_5 \circ \subs_4(\mtype_2) = \subs_5(\mtype_3)$ and $\subs_2(\mtype_1) = \subs_2(\btype)$, \textit{i.e.} $\subs_2(\mtype_1) = \btype$. By the induction hypothesis, we know that $(\kenv_4, \subs_4)$ respects $\kenv_3$, $(\kenv_3, \subs_3)$ respects $\kenv_2$ and $(\kenv_1, \subs_1)$ respects $\kenv$. By applying Lemma \ref{lem1} as many times as needed, we have that $(\kenv_5, \subs_5 \circ \subs_4 \circ \subs_3 \circ \subs_2 \circ \subs_1)$ respects $\kenv$. Now, we are left to show that $\kenv_5, \subs_5 \circ \subs_4 \circ \subs_3 \circ \subs_2 \circ \subs_1(\tenv) \vdash \cond{\gterm_1}{\gterm_2}{\gterm_3} : \subs_5(\mtype_3)$. By the induction hypothesis, we know that $\kenv_1, \subs_1(\tenv) \vdash \gterm_1 : \mtype_1$, $\kenv_3, \subs_3 \circ \subs_2 \circ \subs_1(\tenv) \vdash \gterm_2 : \mtype_2$ and $\kenv_4, \subs_4 \circ \subs_3 \circ \subs_2 \circ \subs_1(\tenv) \vdash \gterm_3 : \mtype_3$. By Lemma \ref{lem2}, we have that $\kenv_5, \subs_5 \circ \subs_4 \circ \subs_3 \circ \subs_2 \circ \subs_1(\tenv) \vdash \gterm_1 : \subs_5 \circ \subs_4 \circ \subs_3 \circ \subs_2 \circ \subs_1(\mtype_1)$, \textit{i.e.} $\kenv_5, \subs_5 \circ \subs_4 \circ \subs_3 \circ \subs_2 \circ \subs_1(\tenv) \vdash \gterm_1 : \subs_5 \circ \subs_4 \circ \subs_3(\btype)$, $\kenv_5, \subs_5 \circ \subs_4 \circ \subs_3 \circ \subs_2 \circ \subs_1(\tenv) \vdash \gterm_1 : \btype$, $\kenv_5, \subs_5 \circ \subs_4 \circ \subs_3 \circ \subs_2 \circ \subs_1(\tenv) \vdash \gterm_2 : \subs_5 \circ \subs_4(\mtype_2)$, \textit{i.e.} $\kenv_5, \subs_5 \circ \subs_4 \circ \subs_3 \circ \subs_2 \circ \subs_1(\tenv) \vdash \gterm_2 : \subs_5(\mtype_3)$, and $\kenv_5, \subs_5 \circ \subs_4 \circ \subs_3 \circ \subs_2 \circ \subs_1(\tenv) \vdash \gterm_3 : \subs_5(\mtype_3)$. Finally, by (Cond), we have that $\kenv_5, \subs_5 \circ \subs_4 \circ \subs_3 \circ \subs_2 \circ \subs_1(\tenv) \vdash \cond{\gterm_1}{\gterm_2}{\gterm_3} : \subs_5(\mtype_3)$.
  \end{itemize}
  \vspace{-0.3in}
\end{proof}

\begin{therm}
\label{thm:complete}
  If $\infer{\kenv}{\tenv}{\gterm} = \textit{fail}$, then there is no $(\kenv_0, \subs_0)$ and $\mtype_0$ such that $(\kenv_0, \subs_0)$ respects $\kenv$ and $\kenv_0, \subs_0(\tenv) \vdash \gterm : \mtype_0$. \\
  If $\infer{\kenv}{\tenv}{\gterm} = (\kenv', \subs, \mtype)$, then if $\kenv_0, \subs_0(\tenv) \vdash \gterm : \mtype_0$ for some $(\kenv_0, \subs_0)$ and $\mtype_0$ such that $(\kenv_0, \subs_0)$ respects $\kenv$, then there is some $\subs'$ such that $(\kenv_0, \subs')$ respects $\kenv'$, $\mtype_0 = \subs'(\mtype)$, and $\subs_0(\tenv) = \subs' \circ \subs(\tenv)$.
\end{therm}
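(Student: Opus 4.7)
The plan is to prove both parts of the theorem simultaneously by structural induction on $\gterm$. The underlying invariant to maintain is: whenever a typing $\kenv_0, \subs_0(\tenv) \vdash \gterm : \mtype_0$ exists with $(\kenv_0, \subs_0)$ respecting $\kenv$, every unification call and every recursive invocation of $\infer{\cdot}{\cdot}{\cdot}$ inside $\infer{\kenv}{\tenv}{\gterm}$ succeeds, and there is a substitution $\subs'$ such that $(\kenv_0, \subs')$ respects the output $\kenv'$, $\mtype_0 = \subs'(\mtype)$, and $\subs_0(\tenv) = \subs' \circ \subs(\tenv)$. The failure clause then drops out as the contrapositive: if $\infer{\kenv}{\tenv}{\gterm} = \mathit{fail}$, any witness typing would, by the construction below, have forced every step to succeed. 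The base case $\cterm^\btype$ is immediate with $\subs' = \subs_0$; for a variable $\vterm$, inverting the (Var) rule in the assumed derivation yields the substitution that instantiated $\tenv(\vterm) = \forall\vtype_1{::}\gkind_1\cdots\forall\vtype_n{::}\gkind_n.\mtype$, and sending each fresh $\beta_i$ introduced by the algorithm to the corresponding witness gives $\subs'$.

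The cases for application, abstraction, record construction, selection, modification, and conditional follow a uniform recipe. Invert the typing rule used in the assumed derivation for $\gterm$ to obtain sub-derivations for each subterm. Apply the induction hypothesis to the first recursive call to produce a substitution $\subs'_1$ compatible with $\subs_0$ on the first subterm, then use Lemma~\ref{lem2} to transport the remaining sub-derivations to the refined kinding and typing environments, and apply the induction hypothesis again. At each call to $\unify{\cdot}{\cdot}$ we construct a unifier explicitly from the assumed derivation: for example, in the application case, the assumed (App)-derivation ensures that $\subs_0$ unifies $\subs_2(\mtype_1)$ with $\atype{\mtype_2}{\mtype_0}$, so extending the composed witness to send the fresh variable $\vtype$ to $\mtype_0$ yields a unifier of the pair passed to $\unify{\cdot}{\cdot}$. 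Completeness of kinded unification (Ohori~\cite{Ohori95}) then guarantees both that unification succeeds and that its output is more general than this exhibited unifier, from which the desired $\subs'$ is recovered. Lemma~\ref{lem1} is used at each composition to push substitutions through kinding and preserve the respecting relation.

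The main obstacle lies in the $\letw{\vterm}{\gterm_1}{\gterm_2}$ and $\letEv{\vterm}{\gterm_1}{\gterm_2}$ cases, where the closure operation $\cls{\kenv_1'}{\subs_1(\tenv)}{\mtype_1}$ must be reconciled with the generalization used in the (Let) or (LetEv) rule of the assumed derivation. The key observation is that, by Definition~\ref{def:geninst} and Definition~\ref{def:closure}, the scheme $\ttype$ produced by the algorithm quantifies precisely over $\eftv{\kenv_1'}{\mtype_1} \setminus \eftv{\kenv_1'}{\subs_1(\tenv)}$, which is at least as permissive as the quantification in the assumed derivation. Consequently, any witness substitution that the assumed derivation provided for $\gterm_1$ factors through the algorithm's output, and extending it over the freshly quantified variables supplies a substitution under which $\vterm$ admits the type scheme needed to type $\gterm_2$. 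Applying the induction hypothesis to the second recursive call then produces the required $\subs'$; the careful bookkeeping is in verifying that this extended substitution still respects the final kinding environment and agrees with $\subs_0$ on $\tenv$.

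The LetEv case is identical in structure to Let, using the restricted type $\etype$ in place of $\mtype_1$, and the failure statement is, as noted, the contrapositive: should any sub-step of $\infer{\kenv}{\tenv}{\gterm}$ return $\mathit{fail}$, the construction above would, from any hypothetical witness typing, have manufactured either a unifier contradicting completeness of kinded unification, or a typing contradicting the induction hypothesis on a proper subterm, so no such witness typing can exist.
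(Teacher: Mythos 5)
Your plan is correct and follows essentially the same route as the paper's proof: structural induction on $\gterm$, inverting the assumed typing derivation, applying the induction hypothesis to each recursive call, exhibiting an explicit unifier from the witness typing so that completeness of kinded unification yields both success and the factoring substitution $\subs'$, with the failure clause obtained as the contrapositive. The paper only writes out the conditional case and defers the remaining cases (including the Let/Cls interaction you rightly flag as the delicate one) to Ohori's original argument, so your sketch is, if anything, slightly more comprehensive in coverage while being identical in method.
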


\begin{proof}
  The proof is by induction on the structure of $\gterm$. We only show the case for $\cond{\gterm_1}{\gterm_2}{\gterm_3}$. The case for $\cterm^\btype$ is trivial and the remaining cases are similar to the corresponding proof in~\cite{Ohori95}.
  \begin{itemize}

    \item $\gterm \equiv \cond{\gterm_1}{\gterm_2}{\gterm_3}$. Suppose that $\infer{\kenv}{\tenv}{\cond{\gterm_1}{\gterm_2}{\gterm_3}} = (\kenv, \subs, \mtype)$. Then $\infer{\kenv}{\tenv}{\gterm_1} = (\kenv_1, \subs_1, \mtype_1)$, $\unify{\kenv_1}{\{(\mtype_1, \bltype)\}} = (\kenv_2, \subs_2)$, $\infer{\kenv_2}{\subs_2 \circ \subs_1(\tenv)}{\gterm_2} = (\kenv_3, \subs_3, \mtype_2)$, $\infer{\kenv_3}{\subs_3 \circ \subs_2 \circ \subs_1(\tenv)}{\gterm_3} = (\kenv_4, \subs_4, \mtype_3)$, $\unify{\kenv_4}{\{(\subs_4(\mtype_2), \mtype_3)\}} = (\kenv_5, \subs_5)$, and $\kenv' = \kenv_5$, $\subs = \subs_5 \circ \subs_4 \circ \subs_3 \circ \subs_2 \circ \subs_1$, $\tau_0 = \subs_5 \circ \subs_4(\mtype_2)$. Now, suppose that $(\kenv_0, \subs_0)$ respects $\kenv$, and $\kenv_0, \subs_0(\tenv) \vdash \cond{\gterm_1}{\gterm_2}{\gterm_3} : \mtype_0$. Then $\kenv_0, \subs_0(\tenv) \vdash \gterm_1 : \bltype$, $\kenv_0, \subs_0(\tenv) \vdash \gterm_2 : \mtype_0$, and $\kenv_0, \subs_0(\tenv) \vdash \gterm_3 : \mtype_0$.\cmiguel{By applying the induction hypothesis to $\gterm_1$, we conclude that there exists some $S^{1}_0$ such that $(\kenv_0, \subs^{1}_0)$ respects $\kenv_1$ and $\bltype = \subs^{1}_0(\mtype_1)$ and $\subs_0(\tenv) = \subs^{1}_0(\subs_1(\tenv))$. Since $\subs^{1}_0$ is a unifier of $\bltype$ and $\mtype_1$, then, by the correctness and completeness of the unification algorithm, there is a $\subs^{2}$ such that $(\kenv_0, \subs_2)$ respects $\kenv_2$ and $\subs^{1}_0 = \subs^{2}_0 \circ \subs_2$.}{} By applying the induction hypothesis to $\gterm_2$, we conclude that there exists some $\subs^{3}_0$ such that $(\kenv_0, \subs^{3}_0)$ respects $\kenv_3$, and $\mtype_0 = \subs^{3}_0(\tenv)$ and $\subs_0(\tenv) = \subs^{3}_0 \circ \subs_3 \circ \subs_2 \circ \subs_1$. Now, by applying the induction hypothesis to $\gterm_3$, we conclude that there exists some $\subs^{4}_0$ such that $(\kenv_0, \subs^{4}_0)$ respects $\kenv_4$, and $\mtype_0 = \subs^{4}_0(\mtype_3)$ and $\subs_0(\tenv) = \subs^{4}_0 \circ \subs_4 \circ \subs_3 \circ \subs_2 \circ \subs_1(\tenv)$. It is easy to see that $\subs^{4}_0$ is a unifier of $\subs_4(\mtype_2)$ and $\mtype_3$. By the correctness and completeness of the unification algorithm, there is a $\subs^{5}_0$ such that $(\kenv_0, \subs^{5}_0)$ respects $\kenv_5$ and $\subs^{4}_0 = \subs^{5}_0 \circ \subs_5$. Then we have $\mtype_0 = \subs^{4}_0(\mtype_3) = \subs^{5}_0 \circ \subs_5 \circ \mtype_3 = \subs^{5}_0 \circ \subs_5 \circ \subs_4 \circ \mtype_2$, $\subs_0(\tenv) = \subs^{4}_0 \circ \subs_4 \circ \subs_3 \circ \subs_2 \circ \subs_1(\tenv) = \subs^{5}_0 \circ \subs_5 \circ \subs_4 \circ \subs_3 \circ \subs_2 \circ \subs_1(\tenv)$.
  \end{itemize}
  \vspace{-0.3in}
\end{proof}
\section{\EVL for event processing}
\label{sec:cep}
In this section we illustrate the use of  \EVL with examples in the context of Complex Event Processing (CEP) and specification of obligation policies.

\subsection{CEP}
The higher-order features of \EVL can be used define parameterised functions to deal with the usual CEP techniques. The canonical model~\cite{Chandy2011,Etzion10} for event processing is based on a producer-consumer model: an event processing agent (EPA) takes events from event producers and distributes them among event consumers.  \EVL is able to process raw events produced by some event processing system and generate derived events as a result. These derived events can then be passed on to an event consumer.

\subsubsection{Event processing agents}
Below we give examples to show how the standard types of event processing agents (see Section~\ref{sec:background} and \cite{Etzion10}) can be defined in \EVL.  An event processing agent is any function whose principal type is of the form  $\forall \vtype_1::\gkind_1 \cdots \forall \vtype_n::\gkind_n.\gamma$.

%
%
%
\paragraph*{Filter agents} take an incoming event object and apply a test to decide whether to discard it or whether to pass it on for processing by subsequent agents. The test is usually stateless, i.e. based solely on the content of the event instance.

\begin{example} This example defines an event processing agent that uses a higher-order filter function \emph{filter} (to be defined later) to filter events according to their location.
    \begin{lstlisting}[mathescape=true, basicstyle=\small]
    let p x = (x.location == $\porto^\strtype$) in $\lambda$x.(filter p x)
    \end{lstlisting}

\end{example}
\paragraph*{Transformation agents} can be either stateless (if events are processed without taking into account preceding or following events) or stateful (if the way  events are processed is influenced by preceding or following events). In the former case, events are processed individually. In the latter, the way events are processed can depend on preceding or succeeding events. Transformation events can be further classified as translate, split, aggregate or compose agents. 
We now give some examples of transformation agents written in \EVL.


%
%

\begin{example}This example represents a translate event processing agent that converts the temperature field of an event from Fahrenheit to  Celsius degrees.
    \begin{lstlisting}[mathescape=true, basicstyle=\small]
    farToCel x = modify(x, temperature, (x.temperature-$32.0^\floattype$)/$1.8^\floattype$)
    \end{lstlisting}
   
    Assuming the usual operational semantics for the $(\minus)$ and $(\divi)$ operators, we can evaluate the following program using the operational semantics defined for the \EVL language as follows:
    {\small
    \begin{align*}
        & \eval{\app{\abs{x}{\modif{x}{\temperature}{((\sel{x}{\temperature}) \minus 32.0^\floattype) \divi 1.8^\floattype)}}}{\{\temperature = 50.0^\floattype\}}} \\
        & \ra \eval{\modif{\{\temperature = 50.0^\floattype\}}{\temperature}{\\&\qquad\qquad\qquad\ ((\sel{\{\temperature = 50.0^\floattype\}}{\temperature}) \minus 32.0^\floattype) \divi 1.8^\floattype)}} \\
        & \ra \eval{\modif{\{\temperature = 50.0^\floattype\}}{\temperature}{((50.0^\floattype \minus 32.0^\floattype) \divi 1.8^\floattype)}} \\
        & \ra \eval{\{\temperature = 10.0^\floattype\}} \ra \{\temperature = 10.0^\floattype\}
    \end{align*}}
\end{example}

\EVL does not allow us to add or remove attributes in events. One can create new events based on attributes from incoming events as well as modify the value of existing attributes. We follow Ohori's treatment of record types, therefore we do not consider operations that extend a record with a new field or that remove an existing field from a record. 



\begin{example}
    This example defines an aggregate event processing agent that receives two events, $x$ and $y$, and outputs event $y$ with its precipitation level updated with the average of the two. 
    \begin{lstlisting}[mathescape=true, basicstyle=\small]
  avg x y = modify(y, precipitation, (x.precipitation + y.precipitation)/$2^\floattype$)
    \end{lstlisting}

\end{example}

\begin{example}
    This example defines an event processing agent that composes the partial weather information that is provided by two different sensors. One of the sensors outputs event $x$, which contains information about the temperature and wind velocity, and the other sensor outputs event $y$, which contains information about the humidity and precipitation levels. This event processing agent outputs an instance of \emph{WeatherInfo} with the complete weather information.
    \begin{lstlisting}[mathescape=true, basicstyle=\small]
     composeInfo x y = WeatherInfo x.temperature x.wind 
                                   y.humidity y.precipitation
     \end{lstlisting}

    Assuming that
    {\footnotesize
    \begin{align*}
        & \WeatherInfo : \\ 
        & \quad \atype{\floattype}{\atype{\floattype}{\atype{\floattype}{\atype{\floattype}{\{\temperature : \floattype, \wind : \floattype, \humidity : \floattype, \precipitation : \floattype\}}}}},
    \end{align*}}
    its principal typing is
    \begin{align*}
        & (\{\alpha_1 :: \kind{\temperature : \floattype, \wind : \floattype}, \alpha_2 :: \kind{\humidity : \floattype, \precipitation : \floattype}\}, \\
        & \quad \atype{\alpha_1}{\atype{\alpha_2}{\{\temperature : \floattype, \wind : \floattype, \humidity : \floattype, \precipitation : \floattype\}}}).
    \end{align*}

\end{example}

\paragraph*{Pattern Detect agents} take collections of incoming events and examine them to see if they can spot the occurrence of particular patterns.

\begin{example}The \emph{check} function in Example~\ref{ex:fire_check} is an event processing agent that generates the appropriate \emph{FireDanger} event by detecting its corresponding fire weather information.
    \begin{lstlisting}[mathescape=true, basicstyle=\small]
     check x = if (x.temperature > $29.0^\floattype$ and x.wind > $32.0^\floattype$
                   and x.humidity < $20.0^\floattype$ and x.precipitation < $50.0^\floattype$)
               then FireDanger x.location $\high^\strtype$
               else FireDanger x.location $\low^\strtype$
    \end{lstlisting} 
    Assuming,
     \[
    \begin{array}{l}
    \greater : \greaterfloattype, \\
        \lesser : \lesserfloattype, \\
        \logicand : \logicandtype,\\
        \FireDanger : \atype{\strtype}{\atype{\strtype}{\{\location : \strtype, \danger : \strtype\}}},
    \end{array}
    \]
    its principal typing is
    {\small
    \begin{align*}
        & (\{\alpha_1 :: \kind{\temperature : \floattype, \wind : \floattype, \humidity : \floattype, \precipitation : \floattype, \location : \strtype}\}, \\
        & \quad \atype{\alpha_1}{\{\location : \strtype, \danger : \strtype\}}).
    \end{align*}}

\end{example}


\subsubsection{A higher-order library for CEP}
Since \EVL is a higher-order language, we can easily define higher-order functions  to deal with a sequence of events (represented as a list of events). We now provide some of these useful higher-order functions.
\begin{itemize}
\item \texttt{filter} is a function that filters the events in the sequence according to some filtering expression:
  \begin{lstlisting}[mathescape=true, basicstyle=\small]
    filter p list = if list.empty then list
                    else if (p list.head) 
                         then (cons list.head (filter p list.tail))
                         else filter p list.tail
\end{lstlisting}

\item \texttt{transform} is a function that applies a transformation to all of the events in the sequence:
  \begin{lstlisting}[mathescape=true, basicstyle=\small]
    transform f list = if list.empty then list
                       else (cons (f list.head) (transform f list.tail))
  \end{lstlisting}

\item \texttt{aggregater} is a function that produces some output value by aggregating by right association the events of the sequence according to some binary aggregating function:
  \begin{lstlisting}[mathescape=true, basicstyle=\small]
    aggregater f z list = if list.empty then z
                          else f list.head (aggregater f z list.tail)
  \end{lstlisting}
\item \texttt{aggregatel} is very similar to \texttt{aggregater} but it aggregates the events by left association:
  \begin{lstlisting}[mathescape=true, basicstyle=\small]
    aggregatel f z list = if list.empty then z
                          else aggregatel f (f z list.head) list.tail
  \end{lstlisting}
\end{itemize}
We now give an example that illustrates several features described in this section.
\begin{example}
    Consider a sequence of events produced by sensors distributed across a number of locations. The events produced by a particular sensor contain information about the weather conditions at that sensor's location. More specifically, it contains information about the temperature (in degrees Celsius), the humidity level (as a percentage), the wind speed (in km/h) and the amount of precipitation (in mm), as well as information about its location. Now, consider an EPA that infers the fire danger of a particular location based on a given sequence of events produced by an arbitrary number of these sensors. This can be done with varying degrees of accuracy, but this is not the subject of this paper, so let us consider a simple algorithm based on the following three steps:
\begin{enumerate}
\item Filtering the events according to the specified location;
\item Aggregating the events according to the latest values of temperature, humidity and wind speed, and by the mean precipitation;
\item Producing an event that indicates whether there is fire danger at that particular location considering the values obtained in the previous step and
comparing them to their threshold levels.
\end{enumerate}

We now provide an implementation of this algorithm in \EVL:
\begin{lstlisting}[mathescape=true, basicstyle=\small]
    letEv FireDanger l d = {location = l, fire_danger = d}
    in let p x = (x.location == $\porto^\strtype$)
    in let f x y = (x.fst + $1^\inttype$, modify(y, precipitation, 
                                        (x.snd.precipitation + y.precipitation)/x.fst)) 
    in let check x = if (x.temperature > $29.0^\floattype$ and x.wind > $32.0^\floattype$ 
                         and x.humidity < $20.0^\floattype$ and x.precipitation < $50.0^\floattype$)
                     then FireDanger x.location $\high^\strtype$
                     else FireDanger x.location $\low^\strtype$
    in $\lambda$x.(check (aggregatel f ($1^\inttype$, {precipitation = 0}) (filter p x)).snd)
\end{lstlisting}
\end{example}

\subsubsection{Typing relations on events}
We now discuss the different semantic relations between events, which are captured by the \EVL typing system.

\paragraph*{Membership} A generic event $\gev_1$ is said to be a member of another generic event $\gev_2$ if the instances of $\gev_1$ are included in the instances of $\gev_2$.  In \EVL this notion is verified by the explicit subtyping relation between record types: assuming $\gev_1$ has type $\ttype_1$ and $\gev_2$ has type $\ttype_2$. $\gev_1$ is said to be a member of $\gev_2$ if, for all $\ttype_3$, if $\kenv \Vdash \ttype_1 \geq \ttype_3$ then $\kenv \Vdash \ttype_2 \geq \ttype_3$ for some $\kenv$.
\begin{example}
    Let $\gev_1$ be a generic event with type $\forall\vtype::\ukind\forall\etype::\kind{\glab_1 : \vtype}.\etype$ and $\gev_2$ be a generic event with type $\forall\etype::\kind{\glab_1 : \inttype}.\etype$. Then $\gev_2$ is a member of $\gev_1$ since for all $\ttype$,  
    \begin{align*}
        \emptyset \Vdash \forall\etype::\kind{\glab_1 : \inttype}.\etype \geq \ttype\qquad \textit{implies}\qquad\emptyset \Vdash \forall\vtype::\ukind\forall\etype::\kind{\glab_1 : \vtype}.\etype \geq \ttype.
    \end{align*}
However, $\gev_1$ is not a member of $\gev_2$ since, for $\ttype = \{\glab_1 : \floattype\}$,
    \begin{align*}
        \emptyset \Vdash \forall\vtype::\ukind\forall\etype::\kind{\glab_1 : \vtype}.\etype \geq \{\glab_1 : \floattype\}\qquad \textit{but}\qquad
        \emptyset \Vdash \forall\etype::\kind{\glab_1 : \inttype}.\etype \not\geq \{\glab_1 : \floattype\}.
    \end{align*}
\end{example}

\paragraph*{Generalization} The generalization relation indicates that an event is a generalization of another event. In the type theory of \EVL,  a (generic) event $\gev_1$ is the generalization of another even $\gev_2$ if $\kenv \Vdash \gev_1 \geq \gev_2$ for some $\kenv$.

\begin{example}
    Let $\gev_1$ be a generic event with type $\forall\vtype::\ukind\forall\etype::\kind{\glab_1 : \vtype}.\etype$ and $\gev_2$ be a generic event with type $\forall\etype::\kind{\glab_1 : \inttype}.\etype$. Then $\gev_1$ is a generalization of $\gev_2$ since
    \begin{align*}
        \emptyset \Vdash \forall\vtype::\ukind\forall\etype::\kind{\glab_1 : \vtype}.\etype \geq \forall\etype::\kind{\glab_1 : \inttype}.\etype.
    \end{align*}
\end{example}

\paragraph*{Specialization} The specialization relation indicates that an event is a specialization of another event. 
 In the type theory of \EVL this notion is the dual of the previous one: an event $\gev_1$ is a specialization of another even $\gev_2$ if $\kenv \Vdash \gev_2 \geq \gev_1$ for some $\kenv$.
 
\begin{example}
    Let $\gev_1$ be a generic event with type $\forall\vtype::\ukind\forall\etype::\kind{\glab_1 : \vtype}.\etype$ and $\gev_2$ be a generic event with type $\forall\etype::\kind{\glab_1 : \inttype}.\etype$. Then $\gev_2$ is a specialization of $\gev_1$ since
    \begin{align*}
        \emptyset \Vdash \forall\vtype::\ukind\forall\etype::\kind{\glab_1 : \vtype}.\etype \geq \forall\etype::\kind{\glab_1 : \inttype}.\etype.
    \end{align*}
\end{example}


\paragraph*{Retraction} A retraction event relationship is a property of an event referencing a second event. It indicates that the second event is a logical reversal of the event type that references it. For example, an event that starts a fire alert and the event that stops it. Unlike the previous notions, retraction is not directly addressed by \EVL.  Retraction is a notion that is also present in access control systems that deal with obligations, where the correct treatment of events is crucial. We will briefly discuss the treatment of events in obligation models in the next section.

\subsection{Event processing in obligation models}
The notion of event and an adequate processing of events is essential to the treatment of obligations in access control models. Obligations are  usually associated with some mandatory action that must be performed at a time defined by some temporal constraints or by the occurrence of an event. The Category Based Metamodel for Access Control with Obligations (CBACO)~\cite{AlvesDM14} defines an obligation as a tuple $o = (a,r,\gev_1,\gev_2)$, where $a$ is an action, $r$ a resource, and $\gev_1,\gev_2$  two generic event ($\gev_1$ triggers the obligation, and $\gev_2$ ends it). The model relies on two additional relations on events:
\begin{itemize}
\item \emph{Event Instantiation}: denoted $\ev::\gev$, meaning that the event $\ev$ is an instance of $\gev$, according to an instance relation between events and generic events.
\item \emph{Event Interval}: denoted $(\ev_1,\ev_2,h)$, meaning that the event $\ev_2$ closes the interval started by the event $\ev_1$ in an history of events $h$. 
\end{itemize}
As discussed in the previous section, the notion of event instantiation is directly captured by the \EVL type system. With respect to event intervals, this notion is closely related to the notion of retraction in CEP and was addressed in~\cite{AlvesBF15} by the definition of a closing function that describes how events are linked to subsequent events in history.  These functions are assumed to be defined for each system and are used to extract intervals from a given history. One of the motivations to develop \EVL was to provide a simple language to program such functions.

\section{Related work}
\label{sec:rw}
Alternative type systems to deal with records have been presented in the literature using row variables~\cite{Wand87}, which are variables ranging over finite sets of field types. One of the most flexible systems using row variables~\cite{Remy92a} allows for powerful operations on records, such as extending a record with a new field or removing an existing field from a record. Record extension is also available in other systems~\cite{Jategaonkar93,HarperM93,Cardelli90}, as well as record concatenation operations~\cite{HarperP91,Remy92,Wand89}, however adding these operations results in complications in the typing process. By following Ohori's approach we obtain a sound and complete efficient type system supporting the basic operations for dealing with records. Nevertheless, regarding the applicability of our language in the context of CEP, the integration of more flexible record operations in our language  is an aspect to be further investigated.

When it comes to processing flows of information, there are two main models leading the research done in this area: the data stream processing model~\cite{Babcock02} (that looks at streams of data coming from different sources to produce new data streams as output); and the complex event processing model~\cite{Luckham02} (that looks at events happening, which are then filtered and combined to produce new events). In~\cite{Cugola12}, several information processing systems were surveyed, which showed a gap between data processing languages  and event detection languages, and the need to define a minimal set of language constructors to combine both features in the same language. We believe that EVL is a good candidate to explore the gap between these two models.

Following the complex event processing model, one of the key features is the ability to derive complex events (composite) from lower-level events and several special purpose Event Query Languages (EQLs) have been proposed for that~\cite{Eckert2011}.  Complex event queries over real-time streams of RFID readings have been dealt with in~\cite{Wu06} yet again using a query language. The TESLA language~\cite{Cugola10} supports content-based event ﬁltering as well as being able to establish temporal relations on events, while providing a formal semantics based on temporal logic. 
The lack of a simple denotational semantics is a common criticism of CEP query languages~\cite{Zimmer99,Artikis17,Galton02}, with several languages not guaranteeing important language features, such as orthogonality, as well as an overlapping of definitions that make reasoning about these languages that much harder.  Recently, a formal framework based on a complex event logic (CEL) was proposed~\cite{Grez19}, with the purpose of ``giving a rigorous and efficient framework to CEP''. The authors define  well-formed  and  safe  formulas, as  syntactic  restrictions  that  characterize semantic properties, and argue that only well-formed formulas should be considered and that users should understand that all variables in a formula must be correctly defined. This notion of well-formed formulas and correctly defined variables is naturally guaranteed in a typed language like \EVL. Therefore we believe that \EVL can be used to provide formal semantics to CEP systems.

In the context of access control systems, the \emph{Obligation Specification Language} (OSL) defined in~\cite{HiltyPBSW07}, presents a language for events to  monitor and reason about data usage requirements. The paper defines the \emph{refinesEv} instance relation between events, which is based on a subset relation on labels, as is the case for the instance relation in~\cite{AlvesDM14}. The instance relation in~\cite{AlvesBF15} was defined by implicit subtyping on records but more generally using variable instantiation. In this paper we further generalise the notion of instance relation and define it formally using kinded instantiation.

Still in the context of access control, Barker et al~\cite{bsw08} have given a representation of events as  finite sets of ground 2-place facts (atoms) that describe an event, uniquely identified by $e_i , i \in \mathbb{N}$, and which includes three necessary facts: $happens(e_i,t_j)$, $act(e_i,a_l)$ and $agent(e_i,u_n)$, and $n$ non-necessary facts. This representation is claimed to be more flexible than a term-based representation with a fixed set of attributes. The language in this paper is flexible enough to encode the event representation in~\cite{bsw08}. Furthermore, the typing system allows us to guarantee any necessary facts by means of the typing information.

Our notion of events follows the approach of the \emph{Event Calculus}, where events are seen as action occurrences, or action happenings in a particular system and at particular point in time. This notion was initially introduced in~\cite{kowalski86} then latter axiomatised in~\cite{MillerS99}, and has been further used in the context of dynamic access control systems~\cite{Barker-etal:07} and in dynamic systems dealing with obligations~\cite{GelfondL08}. As in the case of~\cite{bsw08}, our flexible representation of events is capable of encoding the representation of events in the Event Calculus and the higher-order capabilities of \EVL allow us to reason about events and their effects in a particular system. 

\section{Conclusions and future work}
\label{sec:conc}
In this paper we present \EVL, a typed higher-order functional language for events with a typing system based on Ohori's record calculus, a sound and complete type inference algorithm, and a call-by-value operational semantics that preserves types.  We explore the expressiveness of our language by showing its application in the context of CEP and obligations. 

In future work, we will extend the language to consider more powerful operations on records, such as extending a record with a new field or removing an existing field from a record, which are not part of \EVL but could prove useful in both CEP and in the treatment of obligations. Ohori's main goal was to support the most common operations dealing with polymorphic records, while maintaining an efficient compilation method. Our hope is to be able to provide additional operations supporting extensibility, while maintaining efficient compilation.

Furthermore, we would like to explore extensions of \EVL with pattern matching, which is a powerful mechanism for decomposing and processing data. The ability to detect patterns is a key notion in most CEP systems, therefore adding matching primitives to \EVL would greatly improve its capability with respect to pattern detection.
\paragraph*{Competing Interests Declaration:}  The authors declare none.
  
\bibliographystyle{plain}
\bibliography{references}

\end{document}